\pgfplotsset{compat=1.14}
\title{ On $B_1$-EPG  and EPT graphs}
\keywords{Edge-intersection of paths on a grid, Edge-intersection graph of paths in a tree, Helly property, Intersection graphs, Single bend paths, Vertex-intersection graph of paths in a tree.}
\begin{document}
\begin{abstract}
This research contains  
as a main result the prove that every Chordal $B_1$-EPG graph is simultaneously in the graph classes VPT and EPT. In addition, 
we  describe structures that must be  present in any $B_1$-EPG  graph which does  not admit a Helly-$B_1$-EPG representation. 
In particular, this paper presents  some features of non-trivial families of graphs properly contained in Helly-$B_1$ EPG, namely Bipartite, Block, Cactus and Line of Bipartite  graphs. 
\end{abstract}

\section{Introduction}

Models based on paths intersection  may consider  intersections by vertices or   intersections by edges.  Cases where the paths are hosted on a tree  appear first in the literature, see for instance \cite{gavril1978recognition, golumbic1985edge, golumbic1985}.  Representations using paths on a grid were considered later, see  \cite{golumbic2009,golumbic2013, golumbic2013intersection}. 

 Let $P$ be a family of paths on a host tree $T$ . Two types of intersection graphs from the pair $<P,T>$ are defined, namely VPT and EPT graphs.
The \textit{edge intersection graph} of $P$, EPT(P), has vertices which correspond to the members of $P$, and two vertices are adjacent in EPT(P) if and only if the corresponding paths in $P$ share at least one edge in T. Similarly, the \textit{vertex intersection graph} of $P$, VPT(P), has vertices which correspond to the members of $P$, and two vertices are adjacent in VPT(P) if and only if the corresponding paths in $P$ share at least one vertex in $T$.
VPT and EPT graphs are incomparable families of graphs. However, when the maximum degree of the host tree is restricted to three the family of
VPT graphs coincides with the family of EPT graphs \cite{golumbic1985edge
}. Also it is known that any Chordal EPT graph is VPT (see~\cite{syslo1985triangulated}). Recall that it was shown that Chordal graphs are the vertex intersection graphs of subtrees of a tree \cite{gavril1974intersection}.

Edge intersection graphs of paths on a grid are called \textit{EPG graphs}. 

In \cite{golumbic2009}, the authors proved that every graph is EPG, and started the study of the subclasses
defined by bounding the number of times any path used in the representation can bend.  Graphs admitting a representation
where  paths  have at most $k$ changes of direction  (bends) were called $B_k$-EPG. 
 In particular, when the paths have at most one bend we have the \textit{ $B_1$-EPG graphs} or a \textit{single bend EPG graphs}.

 A pertinent question in the context of path intersection graphs is as follows: given two classes of path intersection graphs,
 the first whose host is a tree and the second whose host is a grid,  is there an intersection or containment relationship among these classes? What do we know about it?

In the present paper we will explore $B_1$-EPG graphs, in particular diamond-free graphs and Chordal graphs. We will work on the question about the containment
relation between  VPT, EPT and $B_1$-EPG graph classes.

 A collection  of sets satisfies the \textit{Helly property} when every pair-wise intersecting sub-collection  has at least one common element. When this property
 is satisfied by the set of vertices (edges) of the paths used in a representation, we get a Helly representation.  Helly-$B_1$-EPG graphs were studied
 in \cite{dmtcs:6506}.                                     
It is known that not every $B_1$-EPG graph admits a Helly-$B_1$-EPG representation. We are interested in determining the subgraphs that make
$B_1$-EPG graphs do not admit a Helly representation. In the present work, we describe some structures that will be present in any such subgraph,
and, in addition, we present new  Helly-$B_1$ EPG  subclasses.
Moreover,  we  describe new  Helly-$B_1$ EPG  subclasses 
and we give some sets of subgraphs that delimit Helly subfamilies.   

\section{Definitions and Technical Results}


The \textit{vertex set} and the \textit{edge set} of a graph $G$ are denoted by $V(G)$ and $E(G)$, respectively.  Given a vertex $v\in V(G)$,  $N(v)$  represents the \textit{open
 neighborhood} of $v$ in $G$. 
For a subset $S \subseteq V(G)$,  $G[S]$ is the subgraph of $G$ induced by $S$.
 If $\mathcal{F}$ is any family of graphs, we say that  $G$ is  \textit{$\mathcal{F}$-free} if $G$ has no induced subgraph isomorphic to a member of $\mathcal{F}$.
 A \textit{cycle},  denoted by $C_n$,  is a sequence of distinct
vertices $v_1, \dots , v_n, v_1$  where $v_i \neq v_j$ for $i \neq j$ and $(v_i, v_i + 1) \in E(G)$, such that
$n \geq 3$. A \textit{chord} is an edge that is between two non-consecutive vertices in a sequence of vertices of a cycle. An \textit{induced cycle}  or \textit{chordless cycles} is a cycle that has no chord, in this paper an induce cycle will simply be called  \textit{cycle}. A graph $G$ formed by an induced cycle $H$ plus  a single universal vertex $v$ connected to all vertices of $H$
is called \textit{wheel graph}. If the wheel has $n$ vertices, it is denoted by $n$-wheel. 

The $k$\textit{-sun graph }$S_k$, $k \geq 3$, consists of
$2k$ vertices, an independent set $X = \{x_1, \dots, x_k\}$ and a clique $Y = \{y_1, \dots, y_k\}$, and edges set $E_1 \cup E_2$, where $E_ 1=\{ (x_1,y_1); (y_1, x_2); (x_2, y_2); (y_2, x_3); \dots , (x_k, y_k); (y_k, x_1) \}$ forms the outer cycle and $E_2= \{(y_i, y_j) |i\neq j\}$ forms the inner clique.

A graph is a $ B_k$-EPG graph if it admits an EPG representation in which each path has at most $k$ bends.  When $ k = 1 $ we say that this is a \emph{single bend EPG} representation or simply a $B_1$-EPG representation.
A \textit{clique} is a set of pairwise adjacent vertices and
an \textit{independent set} is a set of pairwise non adjacent vertices.
Given an EPG representation of a graph $G$, we will identify each vertex $v$ of $G$ with the corresponding path $P_{v}$ of the grid used in the representation. Accordingly, for instance, we will say that a vertex of $G$ covers or contains some edge of the grid (meaning that the corresponding path does), or that a set of paths of the representation
induces a subgraph of $G$ (meaning that the corresponding set of vertices does). 

In  a $B_1$-EPG representation, a clique $K$  is said to be
 an \textit{edge-clique} if all  the vertices of $K$ share a common edge of the grid (see Figure~\ref{fig:cliquesRepresentation}(a)).
 A \textit{claw of the grid} is a set of three edges of the grid incident into a same point of the grid, which is called
  the \textit{center of the claw}. The two edges of the claw that have the same direction form
    the \textit{ base of the claw}. If $K$ is not an edge-clique, then there exists
    a claw of the grid (and only one) such that  the vertices of $K$ are those containing exactly two of the three edges of the claw; such a  clique is called  \textit{claw-clique} \cite{golumbic2009} (see Figure~\ref{fig:cliquesRepresentation}(b)).

\begin{figure}[htb]
  \centering
  \begin{tabular}{  p{4cm} p{0.7cm} p{4cm} }
    \includegraphics[width=4.5cm]{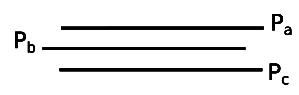} & &
    \includegraphics[width=3.5cm]{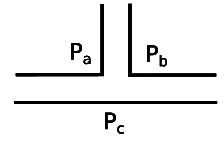}
    \\
    \footnotesize 
    (a)  \footnotesize Representation of a clique as edge-clique. && \footnotesize (b) Representation  of a clique as claw-clique.\\
  \end{tabular}

 \caption{Examples of clique representations.} \label{fig:cliquesRepresentation}
\end{figure}

Notice that if three vertices induce a claw-clique, then exactly two of them turn at the center of the corresponding  claw of the grid, and the third one contains the
base of the claw. 
Furthermore, any other vertex  adjacent to the three  must contain two of the edges of that claw, then the following lemma holds.

\begin{lemma}\label{lem:cliquesMaximais}
If three vertices are together  in more than one maximal clique of a graph $G$, then in
any $B_1$-EPG representation of $G$ the three vertices do not form a claw-clique. 
\end{lemma}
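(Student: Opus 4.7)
My plan is to argue by contradiction: assume some $B_1$-EPG representation of $G$ realizes three vertices $u,v,w$ as a claw-clique, and then show that $u,v,w$ lie in a unique maximal clique of $G$. Let $e_1,e_2$ be the base of the corresponding claw of the grid and $e_3$ the third edge (perpendicular to the base). By the definition of claw-clique, each of $P_u,P_v,P_w$ contains exactly two of the edges $e_1,e_2,e_3$, and the three $2$-subsets used are distinct.

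The key step is the remark stated in the paragraph just before the lemma: every vertex $x$ adjacent to all of $u,v,w$ must likewise contain two edges of this claw. Granting this, I would introduce
\[
    K^{*} \;=\; \{\, x \in V(G) : P_x \text{ contains at least two of } e_1,e_2,e_3\,\}.
\]
Since any two $2$-subsets of a $3$-element set intersect, any two paths belonging to $K^{*}$ share a grid edge, and hence are adjacent in $G$. Therefore $K^{*}$ is a clique of $G$ containing $\{u,v,w\}$.

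To finish, let $K$ be any maximal clique of $G$ with $\{u,v,w\}\subseteq K$. Every vertex of $K$ is adjacent to each of $u,v,w$, so by the key remark every vertex of $K$ lies in $K^{*}$; that is, $K\subseteq K^{*}$. Because $K$ is maximal in $G$ and $K^{*}$ is itself a clique of $G$, we must have $K=K^{*}$. Hence the maximal clique containing $u,v,w$ is unique, contradicting the hypothesis that $u,v,w$ lie in more than one maximal clique.

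The main obstacle is really just the remark that any common neighbour of $u,v,w$ uses two of the claw edges; the paper states this as a known fact, so I would invoke it directly. Were a self-contained justification required, I would do a short case analysis on the shape of $P_x$ (horizontal segment, vertical segment, or L-shape) together with the location of its bend, checking that every way of overlapping simultaneously with $P_u$, $P_v$, and $P_w$ forces $P_x$ to cover two of $e_1,e_2,e_3$.
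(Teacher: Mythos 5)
Your argument is essentially the paper's own: the paper proves this lemma by the remark immediately preceding it, namely that any vertex adjacent to all three must contain two edges of the claw, from which the uniqueness of the maximal clique containing the three vertices follows exactly as in your $K^{*}$ construction. Your write-up just makes explicit the deduction the paper leaves implicit, so it is correct and takes the same route.
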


 


In \cite{ries2009} Asinowski et al. proved the following lemma for $C_4$-free graphs.

\begin{lemma} \cite{ries2009} \label{lem:lemaBRies2009}
Let $G$ be a $B_1$-EPG graph. If $G$ is $C_4$-free, then there exists a $B_1$-EPG representation of $G$ such that every  maximal claw-clique $K$ is represented on a claw of the grid whose base is covered only by vertices of $K$.
\end{lemma}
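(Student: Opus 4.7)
The plan is to fix a $B_1$-EPG representation of $G$ minimizing the total number of grid edges covered by all paths, and to argue that this minimum representation already satisfies the conclusion. Suppose for contradiction that some maximal claw-clique $K$, represented on a claw with base edges $e_1, e_2$, perpendicular edge $e_3$, and center $c$, has the edge $e_1$ covered by a path $P_v$ for some $v \notin K$; write $c_i$ for the endpoint of $e_i$ other than $c$.

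First I would record a structural fact about $K$: it must contain vertices of all three ``types'' with respect to the claw, where type~1 covers $\{e_1, e_2\}$, type~2 covers $\{e_1, e_3\}$, and type~3 covers $\{e_2, e_3\}$. This holds because the paths of any two of the three types share a common claw edge, so if some type were missing then $K$ would be an edge-clique, contradicting the definition of claw-clique. By maximality of $K$, some $u \in K$ is non-adjacent to $v$; since every type~1 and type~2 path covers $e_1$ and is therefore adjacent to $v$, this non-neighbor $u$ must be of type~3. Moreover $P_v$ cannot cover $e_2$ or $e_3$, for otherwise $v$ would cover two edges of the claw and belong to $K$ by maximality; hence $c$ is an endpoint of $P_v$, and $P_v$ extends from $c$ only along the direction of $e_1$.

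Next I would perform the local modification: truncate $P_v$ by deleting the edge $e_1$, so that its new endpoint is $c_1$. This strictly reduces the chosen measure, contradicting minimality provided the modification preserves $G$. The only adjacencies that can be broken are those $vw$ for which $P_v \cap P_w = \{e_1\}$; any such $w$ has $P_w$ covering $e_1$, so $w$ is either a type~1 or type~2 member of $K$, or an outside vertex whose path is likewise pinned at $c$ with support along $e_1$.

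The main obstacle is to discard these remaining cases using $C_4$-freeness. The idea is that whenever $e_1$ is the unique common edge between $P_v$ and $P_w$, one can pair $(v,w)$ with the non-neighbor $u \in K$ of type~3 and, after making the right choice of auxiliary vertex $u'$ from among the other types in $K$, extract a quadruple from $\{v, w, u, u'\}$ (or from a nearby set obtained by replacing $u'$ with another member of $K$ of the opposite type) that induces a $C_4$, contradicting the hypothesis. In the residual subcases where the obvious quadruple fails to be induced---because $v$ and $w$ both covering $e_1$ forces extra chord edges---one performs a simultaneous small extension of $P_w$ or $P_{u'}$ beyond the claw that either shifts $P_v \cap P_w$ away from $e_1$ (so truncation of $P_v$ is safe) or further reduces the extremal measure, both leading to a contradiction. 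Applying this argument to every maximal claw-clique in the minimum representation simultaneously produces a $B_1$-EPG representation with the stated property.
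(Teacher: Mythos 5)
The paper does not prove this lemma; it is quoted from Asinowski and Ries \cite{ries2009}, so there is no in-paper argument to compare yours against, and I am judging your attempt on its own. Your overall strategy (an extremal representation, a local truncation of the offending path $P_v$, and a $C_4$ extracted from any adjacency the truncation would destroy) is a reasonable shape for such a proof, and your preliminary observations (all three ``types'' occur in $K$; the non-neighbor $u$ of $v$ guaranteed by maximality must be of type~3; $P_v$ covers only $e_1$ among the claw edges) are correct. But there are two genuine gaps. First, from ``$P_v$ does not cover $e_2$ or $e_3$'' you conclude that $c$ is an endpoint of $P_v$; this does not follow, because the grid has a fourth edge $e_4$ incident to $c$ (opposite the perpendicular edge $e_3$), and a one-bend path may cover $e_1$, bend at $c$, and continue along $e_4$. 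In that case $e_1$ can be an interior edge of $P_v$ and your truncation is not even defined, let alone measure-reducing.

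Second, and more seriously, the step that carries all the weight --- producing an induced $C_4$ whenever $P_v\cap P_w=\{e_1\}$ for some $w$ --- is only described as an ``idea'' with an unspecified auxiliary vertex and unspecified ``small extensions,'' and in the principal case it cannot work as stated. If $w\in K$ (type~1 or type~2), your candidate quadruple $\{v,w,u,u'\}$ contains three vertices of the clique $K$, hence a triangle; an induced $C_4$ contains at most two vertices from any clique (each of its two non-adjacent pairs can have at most one vertex in $K$), so no induced $C_4$ can be found inside that quadruple. A concrete witness is the diamond: take $K=\{w_1,w_2,u\}$ as a claw-clique and $v$ covering only $e_1$, adjacent to $w_1,w_2$ but not $u$; the graph is $C_4$-free, the configuration you want to forbid occurs, and truncating $P_v$ destroys the edge $vw_1$ or $vw_2$. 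So in this case the remedy must be a different re-routing of the representation, not a forbidden-subgraph contradiction, and that construction --- together with an argument that it terminates or preserves your extremal measure --- is exactly the part of the proof that is missing. As written, the proposal is an outline of intent rather than a proof.
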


We have obtained the following similar result for diamond-free graphs. A \textit{diamond} is a graph $G$ with vertex set $V(G) = \{a, b, c, d\}$ and edge set $E(G)=\{ab, ac,bc, bd,cd\}$.


\begin{lemma}\label{lem:b1epgDiamondFree}
Let $G$ be a $B_1$-EPG graph. If $G$ is diamond-free, then in any $B_1$-EPG representation of $G$,  every maximal claw-clique $K$ is represented on a claw of the grid whose edges are covered only by vertices of $K$.
\end{lemma}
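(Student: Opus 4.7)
The plan is to argue by contradiction: suppose some vertex $v\notin K$ covers an edge of the claw on which $K$ is represented, and produce an induced diamond in $G$.

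Label the three claw edges as $e_1,e_2$ (the base) and $e_3$ (the perpendicular edge), and partition the paths of $K$ into three types according to which pair of claw edges they contain: type $A$ covers $\{e_1,e_2\}$, type $B$ covers $\{e_1,e_3\}$, and type $C$ covers $\{e_2,e_3\}$. As a preliminary observation I would check that each of the three types is non-empty; otherwise all paths of $K$ would share a common claw edge and $K$ would be an edge-clique rather than a claw-clique.

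Next I would pin down how $v$ can meet the claw. It must cover exactly one of the three claw edges: it cannot cover exactly two (this is the defining property of membership in $K$, combined with the maximality of $K$), and it cannot cover all three, because a single-bend path cannot simultaneously contain the two collinear base edges and the perpendicular edge --- the center of the claw would have to be a branching point.

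The main step uses the maximality of $K$ to extract a suitable non-neighbor of $v$ inside $K$: there exists $w\in K$ with $vw\notin E(G)$. I would then handle three symmetric cases according to which claw edge $v$ covers. If $v$ covers only $e_3$, then $v$ is adjacent (through $e_3$) to every path of types $B$ and $C$, so $w$ is forced to lie in type $A$; picking arbitrary $b'\in B$ and $c'\in C$, the four distinct vertices $\{v,w,b',c'\}$ carry exactly the five edges $wb'$, $wc'$, $b'c'$ (inside $K$) and $vb'$, $vc'$ (through $e_3$), with $vw$ as the only non-edge --- an induced diamond. The cases in which $v$ covers only $e_1$ or only $e_2$ are completely analogous, with $w$ forced into type $C$ or type $B$ respectively, and the two witnesses drawn from the remaining types.

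The main subtlety I anticipate is that adjacency in $G$ arises from any shared grid edge, not only from edges of the chosen claw, so $v$ could conceivably be adjacent via an off-claw edge to many specific vertices of $K$. This is precisely why the argument passes through the maximality-based non-neighbor $w$ rather than fixing a particular type-$A$ vertex in advance: the type of $w$ is determined purely by its incidences with the claw edges, and the remaining adjacencies needed for the diamond are then forced by the claw structure itself, independent of any off-claw intersections.
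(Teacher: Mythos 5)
Your proposal is correct and follows essentially the same route as the paper: the paper also argues by contradiction that a path $P_v$ with $v\notin K$ covering a claw edge must cover exactly one of them, and then exhibits an induced diamond on $v$ together with three vertices of $K$ realizing the three pairs of claw edges. The only difference is bookkeeping: the paper fixes a claw-clique triple $K'$ up front and reads the missing edge off the geometry (the path avoiding $v$'s claw edge can meet $P_v$ only at the claw center, hence in no grid edge), whereas you certify the non-edge by extracting a non-neighbor $w$ of $v$ from the maximality of $K$ and letting the claw incidences force $w$'s type --- a slightly more explicit treatment of the same argument.
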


\begin{proof}Let $K$ be a maximal clique which is a claw-clique in a given $B_1$-EPG representation of $G$. Then there exist three vertices of $K$ which induce a claw-clique $K'$ on
the same claw of the grid than $K$. Assume, in order to derive a contradiction, that a vertex $v\notin K$ covers some edge of the claw. Clearly, $v$ must  cover
only one of such edges. Therefore $v$ and the vertices of $K'$ induce a diamond, a contradiction. 
\end{proof}


Let $ Q $ be a grid and let $ (a_1, b),$ $(a_2, b),$ $(a_3, b),$ $(a_4, b)$ be a $4$-star centered at $b$ as depicted in Figure~\ref{fig:piesInGrid}(a). Let $ \mathcal{P} = \{P_1, \dots , P_4\}$ be a collection of four paths each containing a different pair of edges of the $4$-star.
Following \cite{golumbic2009}, we say that the four paths form
\begin{itemize}
\item a \emph{true pie} 
when each one has a bend at $b$, Figure~\ref{fig:piesInGrid}(b); and 
\item a \emph {false pie} when exactly two of the paths 
bend at $b$ and they do not share an edge of the $4$-star, Figure~\ref{fig:piesInGrid}(c). 

\begin{figure}[htb]
  \centering
  \begin{tabular}{c c c c c }
    \includegraphics[width=3.5cm]{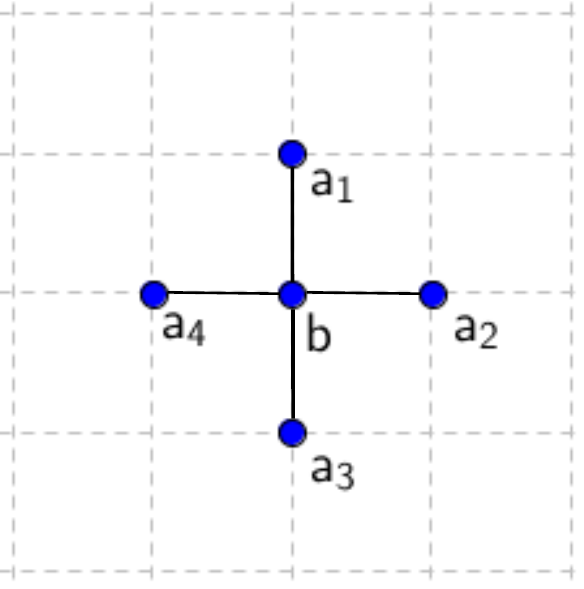}    
    & &\includegraphics[width=3.5cm]{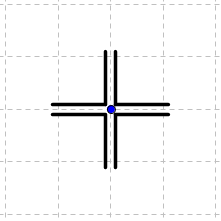} 
    & &
 \includegraphics[width=3.5cm]{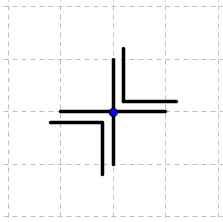} \\
    {\footnotesize (a) 4-star in grid.}  & &  {\footnotesize (b) True pie.} & & {\footnotesize (c) False pie.} 
  \end{tabular}
  \caption{$B_{1}$-EPG representation of the induced cycle of size 4 as pies with emphasis in center $b$.}\label{fig:piesInGrid}
\end{figure}

\end{itemize}

Clearly if four paths of a $B_1$-EPG representation of $G$ form a pie, then the corresponding vertices induce a $4$-cycle in $G$. 
The following result can be easily proved. We say that a set of paths form a claw when each pair of edges of the claw is covered by some of the paths.

\begin{lemma}\label{lem:twoClawNotSameCenterInChordal}
In any $B_1$-EPG representation of a graph $G$, a set of paths forming two different claws centered at the same point of the grid contains four paths forming either a true pie or a false pie. Therefore, in any $B_1$-EPG representation of a chordal graph $G$, no two maximal claw-cliques of $G$ are centered at the same point of the grid.
\end{lemma}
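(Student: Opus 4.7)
The plan is to analyze the local picture at the common center $b$ and exhibit the four paths of a pie in each possible case. At any grid vertex there are at most four incident edges; I label the horizontal ones $e_1,e_2$ and the vertical ones $e_3,e_4$. Since each claw consists of three edges at $b$, two distinct claws centered at $b$ must share exactly two of the four edges and collectively cover all four. I split on whether that shared pair is collinear.

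Suppose first that the two claws share a collinear pair, say $\{e_1,e_2\}$. Then, up to symmetry, the claws are $\{e_1,e_2,e_3\}$ and $\{e_1,e_2,e_4\}$. By hypothesis the given set of paths forms both claws, so each of the four ``corner'' pairs $\{e_1,e_3\},\{e_2,e_3\},\{e_1,e_4\},\{e_2,e_4\}$ is covered by some path of the set. Every such path must bend at $b$ since the two edges it covers at $b$ are perpendicular; the four resulting paths therefore form a true pie. If instead the shared pair is not collinear, say $\{e_1,e_3\}$, the claws are (up to symmetry) $\{e_1,e_2,e_3\}$ and $\{e_1,e_3,e_4\}$. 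Then the set contains paths covering $\{e_1,e_2\}$ and $\{e_3,e_4\}$, each passing straight through $b$ since those pairs are collinear, together with paths covering $\{e_2,e_3\}$ and $\{e_1,e_4\}$, each forced to bend at $b$ by the single-bend constraint. These two bending paths use disjoint edges of the 4-star at $b$, so the four chosen paths form a false pie.

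For the second assertion, I suppose for contradiction that two distinct maximal claw-cliques $K_1,K_2$ of $G$ are centered at the same point $b$. Because a claw-clique is uniquely determined by its claw, $K_1$ and $K_2$ correspond to different claws at $b$, and the paths of $K_1 \cup K_2$ form both of these claws. Applying the first part, four of them form a pie; and (as remarked immediately before the lemma statement) any pie induces a $C_4$ in $G$. This contradicts the chordality of $G$.

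The only real work is the case analysis in the first part; once the two shared-pair configurations are sorted out, identifying which paths bend at $b$ and which cross it straight is immediate from the single-bend constraint, and the deduction of an induced $C_4$ from a pie is already cited in the paper.
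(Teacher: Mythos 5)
Your proof is correct; the paper itself gives no argument for this lemma (it is dismissed with ``The following result can be easily proved''), and your two-case analysis on whether the shared pair of claw edges is collinear or perpendicular is exactly the natural argument that fills this gap, yielding a true pie in the first case and a false pie in the second. The only step you leave implicit in the second assertion is that a claw-clique, being by definition a clique that is \emph{not} an edge-clique, must realize all three pairs of edges of its claw (otherwise all its paths would share a common edge), which is what guarantees that $K_1\cup K_2$ indeed ``forms'' both claws in the sense required by the first part.
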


\begin{lemma}\label{lem:3cliquesNotClaw}
Let $G$ be a graph whose vertex set  can be
partitioned into a non trivial clique $K$ and an independent set $I=\{w_1,w_2,w_3\}$, such that each vertex of $K$ is adjacent to each vertex of $I$. Then, in any $B_1$-EPG representation of $G$, at least one of the cliques  $K_i = K \cup \{w_i\}$, with $1 \leq i \leq 3$,  is an edge-clique. 
\end{lemma}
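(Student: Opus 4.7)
The plan is to assume, toward a contradiction, that each of $K_1,K_2,K_3$ is realised as a claw-clique with claw $\Gamma_i$ centred at $c_i$, and then derive a contradiction in three stages.

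First, I would show that $c_1,c_2,c_3$ are pairwise distinct. If $c_i=c_j$, the claws $\Gamma_i$ and $\Gamma_j$ must be distinct (otherwise $w_i$ and $w_j$ would both be forced into the unique third type on the shared claw, hence share edges and be adjacent, contradicting the independent set). Two distinct claws at a common grid vertex share exactly two edges, and since every $v\in K$ lies in both $K_i$ and $K_j$, its two path-edges at that vertex must equal $\Gamma_i\cap\Gamma_j$. Consequently $K$ has a single type at $\Gamma_i$, and any other $2$-subset of $\Gamma_i$'s edges---in particular the type of $w_i$---shares an edge with it, so $K_i$ is an edge-clique, a contradiction.

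Next, the claw-clique structure of $K_i$ requires all three $2$-subsets of $\Gamma_i$'s edges to appear among its paths; since $w_i$ contributes one, $K$ contributes the other two (and by Lemma~\ref{lem:cliquesMaximais} applied to any three vertices of $K$ when $|K|\geq 3$, $K$ cannot contribute a third type; the case $|K|=2$ is trivial). Therefore $K$ realises exactly two types at $\Gamma_i$, sharing a unique common edge $e_i\in\Gamma_i$, and $w_i$ realises the third type $\Gamma_i\setminus\{e_i\}$.

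The key stage is a case split on whether $e_i$ is the perpendicular edge of $\Gamma_i$ or one of its base edges. If some $e_i$, say $e_1$, is perpendicular, both $K$-types at $\Gamma_1$ pair the perpendicular with a base edge, so every $v\in K$ must bend at $c_1$; being single-bend, $v$ passes straight through $c_2$, and its type at $\Gamma_2$ reduces to the base of $\Gamma_2$, collapsing $K$'s types at $\Gamma_2$ to one---contradicting the second step. So every $e_i$ is a base edge of $\Gamma_i$, and the two $K$-types at $\Gamma_i$ become ``straight through $c_i$'' and ``bend at $c_i$.'' Choose $v_i\in K$ of the bend type at $\Gamma_i$; these three vertices are distinct (each bends only once, at a different $c_i$), so $|K|\geq 3$, and the case $|K|=2$ is already a contradiction. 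Each $v_i$ is straight through $c_j$ and $c_k$, so its straight arm covers both bases of $\Gamma_j$ and $\Gamma_k$; combining these constraints across $i=1,2,3$ puts all three bases on a common grid line $L$, and hence $c_1,c_2,c_3\in L$. Because $v_i$ bends at $c_i$, its straight arm extends from $c_i$ in exactly one direction along $L$ and must reach both $c_j$ and $c_k$, so each $c_i$ is an extreme point of the collinear triple $\{c_1,c_2,c_3\}$. This is impossible, completing the proof.

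The main obstacle I expect is the geometric bookkeeping in the final stage: one must argue that the three bases lie on a common grid line rather than on several parallel (or mutually perpendicular) lines, based on the single-bend structure of each $v_i$'s straight arm, before the extreme-point contradiction becomes visible. The earlier steps reduce to pigeonhole-type counting once the type structure is in place.
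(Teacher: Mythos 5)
Your proposal is correct and follows essentially the same route as the paper: assume all three $K_i$ are claw-cliques, show their centers are pairwise distinct, extract for each center a vertex of $K$ whose path bends there, and derive a contradiction from the fact that each such single-bend path must contain all three centers. The paper reaches the final contradiction more directly---by a short case analysis on which center lies between the other two along $P_{v_1}$---which is precisely the ``geometric bookkeeping'' you flag as the remaining obstacle; your intermediate type analysis and the perpendicular-versus-base case split, while sound, are not needed for that step.
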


\begin{proof}
Assume, in order to derive a contradiction, that the three cliques are claw-cliques. By Lemma~\ref{lem:twoClawNotSameCenterInChordal}, they have different centers, say the points $q_1, q_2, q_3$ of the grid, respectively. Since at least two paths have a bend at the center of a claw, for each $i\in\{1,2,3\}$,   there must exist a vertex
  $v_i$ of $K$ such that the corresponding path $P_{v_i}$ turns at the point $q_i$ of the grid.  Notice that each one of the three paths $P_{v_i}$
  must contain  the three grid points $q_1$, $q_2$ and $q_3$. To prove that this is not possible, we will consider, without loss of generality, two cases.
  First,  $q_1$ is between $q_2$ and $q_3$ in $P_{v_1}$. Then, $P_{v_3}$ cannot turn at $q_3$ and contain $q_1$ and $q_2$.   And second,
  $q_2$ is between $q_1$ and $q_3$ in $P_{v_1}$. In this case, $P_{v_2}$ cannot turn at $q_2$ and contain $q_1$ and $q_3$; thus the proof is completed.
 \hfill 
\end{proof}

Three vertices $u, v, w$ of a graph $G$ form an \textit{asteroidal triple} (AT) of $G$ if for every pair of them there exists a path connecting the two vertices and such that the path avoids the neighborhood of the remaining vertex~\cite{Asinowski2009}. A graph without an asteroidal triple is called \textit{AT-free}. 

\begin{lemma}
[\cite{ries2009}] \label{l:AT-free} Let $v$ be any vertex of a $B_1$-EPG graph $G$. Then $G[N(v)]$ is AT-free.
\end{lemma}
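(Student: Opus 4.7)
I would argue by contradiction inside a fixed $B_1$-EPG representation of $G$. Let $P_v$ be the path for $v$; with at most one bend, its edges admit a natural linear order obtained by traversing $P_v$ from one endpoint to the other. The key preliminary observation is that for every $u\in N(v)$ the set of shared edges $I_u:=E(P_u)\cap E(P_v)$ is a non-empty connected subpath of $P_v$: since each of $P_u,P_v$ is made of at most two straight segments joined at a bend, the shared edges form a horizontal overlap, a vertical overlap, or an $L$-shape around a common bend point, and in every case $I_u$ is a contiguous stretch of $P_v$.

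Suppose, toward a contradiction, that $\{a,b,c\}\subseteq N(v)$ is an asteroidal triple of $G[N(v)]$. Pairwise non-adjacency forces $I_a,I_b,I_c$ to be pairwise edge-disjoint, so after relabelling I may assume $I_b$ lies strictly between $I_a$ and $I_c$ on $P_v$. Let $\pi=(a=x_0,x_1,\ldots,x_k=c)$ be a path in $G[N(v)]$ whose interior avoids $N(b)$, given by the AT property. Each $x_j$ satisfies $I_{x_j}\cap I_b=\emptyset$, so $I_{x_j}$ lies entirely before or entirely after $I_b$; because $I_{x_0}=I_a$ is before $I_b$ and $I_{x_k}=I_c$ is after, some consecutive pair $x_i,x_{i+1}$ has $I_{x_i}$ before and $I_{x_{i+1}}$ after $I_b$. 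It then suffices to prove that $P_{x_i}$ and $P_{x_{i+1}}$ cannot share any grid edge, contradicting $x_ix_{i+1}\in E(G)$.

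A shared edge cannot lie on $P_v$, since $I_{x_i}$ and $I_{x_{i+1}}$ are separated by $I_b$; thus any putative shared edge is off $P_v$. Here the one-bend restriction is decisive: the segment of $P_{x_i}$ lying along $P_v$ already carries $I_{x_i}$, so the only admissible off-$P_v$ edges of $P_{x_i}$ are a prolongation of that straight segment beyond an endpoint of $P_v$ (which cannot cross $I_b$, as $x_i\notin N(b)$) together with a single perpendicular segment attached at one endpoint of $I_{x_i}$; the analogous description holds for $P_{x_{i+1}}$. I would finish by a short case analysis on whether $P_v$ is straight or bent and on which parts of $P_v$ host $I_{x_i}$, $I_b$, $I_{x_{i+1}}$, showing that in every configuration the admissible off-$P_v$ pieces of $P_{x_i}$ and $P_{x_{i+1}}$ sit on grid rows or columns strictly separated along $P_v$ by $I_b$, and hence share no edge.

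The main obstacle is the subcase in which one of $I_{x_i}$, $I_b$, or $I_{x_{i+1}}$ straddles the bend of $P_v$: the corresponding path then has its bend pinned at that specific grid point, and several geometric sub-configurations must be examined to rule out incidental coincidences of the perpendicular segments. Once these bend-straddling cases are dispatched together with the easier cases where all three intervals lie in a single straight arm of $P_v$, the contradiction is immediate and AT-freeness of $G[N(v)]$ follows.
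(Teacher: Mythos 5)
The paper itself does not prove this lemma: it is imported verbatim from Asinowski and Ries \cite{ries2009}, so there is no in-paper argument to compare against, and your reconstruction is a genuine addition. Your route is sound and is essentially the standard one: each trace $I_u=E(P_u)\cap E(P_v)$ is a contiguous stretch of the (at most one-bend) path $P_v$, pairwise non-adjacency of the triple makes the three traces disjoint so one of them, $I_b$, separates the other two, and any $a$--$c$ path in $G[N(v)]$ avoiding $N(b)$ must contain a consecutive adjacent pair $x_i,x_{i+1}$ whose traces straddle $I_b$. The one step you defer to ``a short case analysis'' --- that two one-bend paths meeting $P_v$ on opposite sides of $I_b$ cannot share a grid edge --- is exactly the contrapositive of Observation~\ref{f:between} of this paper (also stated there without proof and used in Lemmas~\ref{lem:F_8_9_10(8)} and~\ref{lem:F_10(n)}): taking $e_\ell\in I_{x_i}$, $e_m\in I_b$ and $e_r\in I_{x_{i+1}}$, a shared edge would force $P_{x_i}$ or $P_{x_{i+1}}$ to contain $e_m$ and hence make $x_i$ or $x_{i+1}$ adjacent to $b$, the contradiction you want. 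Invoking that observation closes your proof cleanly. Two minor inaccuracies in your geometric description, neither fatal: the perpendicular segment of $P_{x_i}$ attaches at an endpoint of the full straight run of $P_{x_i}$, which may lie strictly beyond $I_{x_i}$ (the run can protrude from $P_v$ not only past a tip of $P_v$ but also past the bend of $P_v$, where $P_{x_i}$ continues straight while $P_v$ turns), so it need not sit at an endpoint of $I_{x_i}$. The case analysis still succeeds because non-adjacency to $b$ confines every row and column carrying an edge of $P_{x_i}$ to the $I_{x_i}$-side of $I_b$, and symmetrically for $P_{x_{i+1}}$, but if you carry it out yourself rather than citing Observation~\ref{f:between} you must treat those protruding configurations explicitly.
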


Let $C$ be any subset of the vertices of a graph $G$. The \textit{branch graph} $B(G|C)$, see~\cite{golumbic2009}, of $G$ over $C$ has a vertex set, $V(B)$, consisting of all the vertices of $G$ not in $C$ but adjacent to some member of $C$, i.e. $V(B) = N(C) - C$. Adjacency in $B(G|C)$ is defined as follows: we join two vertices $x$ and $y$ by an edge in $E(B)$ if and only if in $G$ occurs:
\begin{enumerate}
    \item  $x$ and $y$ are not adjacent;
    \item $x$ and $y$ have a common neighbor $u \in C$;
    \item the sets $N(x) \cap C$ and $N(y) \cap C$ are not comparable, i.e. there exist private neighbors $w, z \in C$ such that $w$ is adjacent to $x$ but not to $y$, and $z$ is adjacent to $y$ but not to $x$; we say that $x$ and $y$ are neighborhood incomparable.
\end{enumerate}

We let  $\chi(G)$ denote the chromatic number of $G$. 

\begin{lemma}[~\cite{golumbic2009}] \label{l:branch} Let $C$ be any maximal clique of a $B_1$-EPG  graph $G$. Then, the branch graph $B(G|C)$ is $\{P_6, \, C_n \hbox{ for }  n\geq 4\}$-free, and $\chi(B(G/C))\leq 3$.
\end{lemma}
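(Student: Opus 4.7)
The plan is to exploit the geometry of the $B_1$-EPG representation of $G$ restricted to the clique $C$, and then catalog the paths $P_x$ for $x\in N(C)-C$ by how they attach to that representation. In any $B_1$-EPG representation the maximal clique $C$ is realized either as an edge-clique on a grid edge $e=uv$ or as a claw-clique on three edges meeting at a grid point $q$. In either case the representation of $C$ occupies a bounded region of the grid, and for each vertex $x\in N(C)-C$ the path $P_x$ must share an edge with some $P_c$ ($c\in C$) without covering the whole ``core'' of $C$'s representation, for otherwise $x$ would be adjacent to every vertex of $C$ and, by maximality of $C$, would belong to $C$. This constrains each $P_x$ to cover at least one of a small, fixed set of grid edges adjacent to the edge-clique or claw-clique, allowing a partition of $N(C)-C$ into a bounded number of \emph{attachment types}.

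Next I would translate the three adjacency conditions defining $B(G|C)$ into geometric statements about these types. Condition (1), $P_x$ and $P_y$ are disjoint in grid edges; condition (2), some $P_c$ shares an edge with both $P_x$ and $P_y$, which pins a common endpoint (or the center $q$) of $C$'s representation that the two paths both reach; condition (3), incomparability of $N(x)\cap C$ and $N(y)\cap C$, which forces the two paths to extend in two \emph{different} directions away from the edge or claw of $C$ so that some $c\in C$ extends only into one direction and some $z\in C$ only into the other. Using Lemma~\ref{lem:twoClawNotSameCenterInChordal} to rule out two maximal clique-cliques centered at the same grid point, and the one-bend restriction, I would show that the attachment types split naturally into three groups (roughly, those whose paths extend from one ``side'' of the representation of $C$, from the other ``side'', or straddle both) and that two vertices in the same group are either forced to share a grid edge, contradicting (1), or have comparable $C$-neighborhoods, contradicting (3). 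This yields $\chi(B(G|C))\le 3$.

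For the forbidden induced subgraphs I would argue directly from the same type catalogue. A putative induced $C_n$ ($n\ge 4$) in $B(G|C)$ corresponds to $n$ paths attaching cyclically to $C$'s representation, each pair of consecutive ones being neighborhood-incomparable in $C$ and non-adjacent in $G$, while each pair of non-consecutive ones fails condition (2) or (3). Tracking around the cycle, the incomparability condition forces the ``direction of attachment'' to alternate among the bounded number of types, and the pie lemmas together with the one-bend condition create either a pie at the core of $C$ (hence a chord produced by an edge of $G$, breaking the induced cycle) or a contradiction with $C$ being a clique. For an induced $P_6=v_1v_2\cdots v_6$ I would iterate the incomparability propagation: each step along the path forces a switch of attachment side, and beyond a few steps the side pattern must repeat, yielding either an edge between two non-consecutive $v_i$ (contradicting induced $P_6$) or a violation of one of the three branch-graph conditions.

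The main obstacle I anticipate is the sheer case analysis: the edge-clique case and the claw-clique case both require exhaustive enumeration of attachment types and of how a pair of paths can simultaneously satisfy (1)--(3). Keeping this manageable will hinge on organizing the types so that the three color classes and the three branch-graph conditions can be verified uniformly, and on systematically invoking the pie lemmas whenever four paths attempt to meet at a common grid point or claw center.
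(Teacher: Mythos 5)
The paper does not actually prove this statement: Lemma~\ref{l:branch} is imported verbatim from \cite{golumbic2009}, so there is no in-paper argument to compare yours against. Judged on its own, your text is a strategy outline rather than a proof. The overall plan --- represent $C$ as an edge-clique or claw-clique, classify the branch vertices by how their paths attach to that representation, and then check the three defining conditions of $B(G|C)$ geometrically --- is indeed the right spirit and is close to how Golumbic, Lipshteyn and Stern argue. But every load-bearing step is asserted, not carried out: you never define the attachment types, never verify that each type is an independent set of $B(G|C)$, never explain why the number of types collapses to exactly three, and the $P_6$- and $C_n$-freeness discussion is purely programmatic (``tracking around the cycle \dots forces \dots alternation'').

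Two of the asserted steps are also concretely wrong or underdetermined as stated. First, it is not true that each $P_x$ with $x\in N(C)-C$ must ``cover at least one of a small, fixed set of grid edges adjacent to'' the representation of $C$: the edge that $P_x$ shares with some $P_c$ can lie arbitrarily far from the common edge or claw center, anywhere along the ray of $P_c$ that leaves the core. So the classification has to be by the (boundedly many) rays emanating from the core, not by a fixed set of nearby grid edges, and one must then handle the fact that a single one-bend path $P_x$ can meet two different rays, which makes the ``type'' of $x$ non-unique without further argument. Second, the bound $3$ is exactly the delicate point: an edge-clique on a grid edge $e=pq$ has up to six rays (three at $p$, three at $q$), and your proposed trichotomy ``one side / other side / straddling'' does not explain why six ray-classes merge into three colour classes while keeping each class independent in $B(G|C)$; this is where conditions (2) and (3) must be used quantitatively (comparability of the intervals that $P_x$ and $P_y$ occupy on a common ray), and that analysis is missing. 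As it stands the proposal identifies the right geometric picture but leaves the theorem unproved.
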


\section{Subclasses of Helly-$B_1$-EPG Graphs}

In this section, we delimit some  subclasses of $B_1$-EPG graphs that admit a Helly-$B_1$-EPG representation. It is known that $B_1$-EPG and Helly-$B_1$ EPG 
are hereditary classes, so they can  be characterized by forbidden structures. 
In both cases, finding the list of minimal forbidden induced subgraphs are challenging open problems.
Taking a step towards solving
those problems,  we describe a few structures 
at least one of which will  necessarily be present in  any $B_1$-EPG graph that does not admit a Helly representation. 
In addition,
we show that the well known families of Block graphs, Cactus and Line of Bipartite graphs are totally contained in the class Helly-$B_1$ EPG.

Let $S_{3}, S_{3'}, S_{3''}$ and $ C_{4}$ be the graphs depicted in Figure \ref{fig:proibidos}.

\begin{theorem}
\label{lem:chordalDiamondFree}
Let $G$ be a $B_1$-EPG graph. If $G$ is  $\{S_{3}, S_{3'}, S_{3''}, C_{4}\}$-free then $G$  is a Helly-$B_1$-EPG graph.
\end{theorem}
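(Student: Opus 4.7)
The plan is to argue that the $B_1$-EPG representation $\mathcal R$ of $G$ produced by Lemma~\ref{lem:lemaBRies2009} (applicable because $G$ is $C_4$-free) is itself Helly, which immediately gives the conclusion. To make the argument work, I choose $\mathcal R$ among all representations satisfying the conclusion of Lemma~\ref{lem:lemaBRies2009} so as to minimise, lexicographically, the pair (number of maximal claw-cliques, total number of grid edges covered). The goal is to show that the first coordinate is zero.

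Suppose, for contradiction, that some maximal clique $K$ of $G$ is realised as a claw-clique in $\mathcal R$, with claw centre $q$, base edges $e_L, e_R$, and top edge $e_U$. Pick witnesses $y_1, y_2, y_3 \in K$ whose paths cover, respectively, $\{e_L,e_U\}$, $\{e_R,e_U\}$ and $\{e_L,e_R\}$. By Lemma~\ref{lem:lemaBRies2009}, no path outside $K$ covers $e_L$ or $e_R$; this will be the workhorse of the argument that follows.

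The crux is to produce three vertices $x_L, x_U, x_R \in V(G)\setminus K$ with $N(x_L)\cap\{y_1,y_2,y_3\}=\{y_1,y_3\}$, $N(x_U)\cap\{y_1,y_2,y_3\}=\{y_1,y_2\}$ and $N(x_R)\cap\{y_1,y_2,y_3\}=\{y_2,y_3\}$. For each of these I argue by contradiction, using a re-routing step. Consider $x_U$: if it does not exist, then no outside path shares $e_U$ (or the common upward extension of $P_{y_1}$ and $P_{y_2}$) with both $y_1$ and $y_2$, since any such vertex would be adjacent to $y_1$ and $y_2$ but not to $y_3$ (being adjacent to $y_3$ would require covering $e_L$ or $e_R$, impossible by Lemma~\ref{lem:lemaBRies2009}, or would contradict the maximality of $K$). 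Then I replace $P_{y_3}$ by a single-bend path that covers $\{e_L,e_U\}$ instead of $\{e_L,e_R\}$, bending up at $q$ and preserving the left extension of $P_{y_3}$. All adjacencies of $y_3$ with $K$ survive through $e_U$, while outside adjacencies lost on the right half of $P_{y_3}$ are ruled out by Lemma~\ref{lem:lemaBRies2009} combined with the minimality of $\mathcal R$ (every deleted edge must, by tightness, be shared with some outsider; classifying that outsider's adjacencies to $\{y_1,y_2,y_3\}$ using the absence of $x_U$ forces a contradiction). The modified representation is then a representation of $G$ with strictly fewer claw-cliques, contradicting the choice of $\mathcal R$. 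The existence of $x_L$ and $x_R$ follows by symmetric re-routings aimed at collapsing $K$ onto $e_L$ and onto $e_R$ respectively.

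With $x_L, x_U, x_R$ in hand, the six vertices $\{y_1, y_2, y_3, x_L, x_U, x_R\}$ induce in $G$ a graph whose $y$-part is a triangle and whose $y$–$x$ adjacencies replicate the outer cycle of the 3-sun. According to how many of the three edges $x_Lx_U$, $x_Lx_R$, $x_Ux_R$ are present, the induced subgraph is isomorphic to $S_3$, $S_{3'}$, or $S_{3''}$; these three configurations exhaust all possibilities. Since $G$ is $\{S_3, S_{3'}, S_{3''}\}$-free, this is a contradiction, so no maximal clique of $G$ is a claw-clique in $\mathcal R$, and $\mathcal R$ is Helly. The main technical obstacle is the re-routing step: one has to verify in detail that cutting off one side of $P_{y_3}$ (or the analogous cut for $P_{y_1}$ and $P_{y_2}$) loses no outside adjacency. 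This is where the minimality of $\mathcal R$ and the absence of the relevant $x_i$ have to be combined with Lemma~\ref{lem:lemaBRies2009} to rule out pathological outside paths running along extensions of the claw; the rest of the proof, including the six-vertex case analysis at the end, is then straightforward bookkeeping.
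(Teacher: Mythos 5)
Your overall strategy---try to re-route the paths of the claw-clique $K$ into an edge-clique, and when a re-routing is blocked, extract an outside ``blocker'' vertex and assemble a forbidden subgraph from the blockers---is the same as the paper's. But there are two genuine gaps. First, the ``crux'' step is both unproven and stronger than what is actually true: you claim each blocker $x_L,x_U,x_R$ can be chosen adjacent to \emph{exactly two} of $\{y_1,y_2,y_3\}$. What the representation actually forces is weaker. In the paper's argument the blocker on the column (your $x_U$, the paper's $t$) is adjacent to both bending witnesses and not to the base witness, but the two blockers on the baseline (your $x_L,x_R$; the paper's $t''$, $t'$) are each forced to be adjacent only to \emph{one} bending witness, and their adjacency to the base path $u$ is genuinely undetermined --- that undetermined adjacency is precisely what distinguishes $S_3$, $S_{3'}$ and $S_{3''}$. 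Your parenthetical justification (``being adjacent to $y_3$ would require covering $e_L$ or $e_R$'') is false: a path can meet $P_{y_3}$ far out on its horizontal extension without touching the base, so Lemma~\ref{lem:lemaBRies2009} does not rule it out. Consequently your final case analysis is keyed to the wrong parameter: you distinguish $S_3,S_{3'},S_{3''}$ by the number of edges \emph{among} the outer vertices, whereas in the configuration the proof actually produces the outer set is always independent (this follows from the one-bend geometry together with Lemma~\ref{lem:lemaBRies2009}) and the three graphs differ in how many of the two baseline blockers see $u$. As written, the six-vertex graphs you produce need not belong to $\{S_3,S_{3'},S_{3''}\}$, so no contradiction is obtained.

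Second, the minimality/re-routing argument is not sound as sketched. You minimise over representations satisfying the conclusion of Lemma~\ref{lem:lemaBRies2009}, but after bending $P_{y_3}$ upward the new representation need not remain in that class, so you cannot compare it against the minimiser; and minimality of the total number of grid edges only forces a deletable tip edge to be shared with \emph{some} path, not with a path outside $K$, so ``every deleted edge is shared with an outsider'' does not follow. The paper avoids all of this: it works in an arbitrary representation, performs the re-routings directly (shortening the vertical, then the horizontal, parts of the bending paths of $\mathcal P_K$), and each time a re-routing fails it names the concrete obstructing path ($P_t$ on the column, $P_{t'}$ and $P_{t''}$ on the baseline); the forbidden subgraph is then read off from $\{v,v',u,t,t',t''\}$ with no extremal choice needed. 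I would rework your argument along those lines: drop the minimality device, keep the blocker extraction, establish the independence of the blockers from the geometry, and redo the final identification with the variation placed on the $u$--blocker edges.
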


\begin{proof}
If $G$ is not a Helly-$B_1$-EPG graph, then in each $B_1$-EPG representation of $G$, there is at least one clique that is represented as claw-clique and no as edge-clique. Consider any $B_1$-EPG  representation of $G$  and let $K$ be a maximal clique  which is represented as a claw-clique. Assume, w.l.o.g,  $K$ is on a claw of the grid with base $[x_0, x_2]\times\{y_0\}$ and center $C = (x_1, y_0)$. Denote by  $\mathcal{P}_K$ the set of paths corresponding to the vertices of $K$.  By Lemma~\ref{lem:lemaBRies2009},  
the grid segment $[x_0, x_2]\times\{y_0\}$ is covered only by vertices of $K$. 

 For every ${\displaystyle \lrcorner}$-path 
 (resp. ${\displaystyle \llcorner}$-path 
 ) belonging to $\mathcal{P}_K$, we do the following: if 
 the path does not intersect any path $P_t \notin\mathcal{P}_K$ on column $x_1$, then we delete its vertical segment and add the grid segment $[x_1, x_2]\times\{y_0\}$ (resp. $[x_0, x_1]\times\{y_0\}$). If after this transformation there is no more ${\displaystyle \lrcorner}$-paths (resp. ${\displaystyle \llcorner}$-paths) in $\mathcal{P}_K$, then we are done since we have obtained an edge-clique. So we may assume that
 every ${\displaystyle \lrcorner}$-path   and every ${\displaystyle \llcorner}$-path  in $ \mathcal{P}_K$ intersects some path $P_t \notin \mathcal{P}_K$   on column $x_1$ (notice that we can assume is the same path $P_t$ for all the vertices). 
 
 Now, if none of the ${\displaystyle \lrcorner}$-paths belonging to $\mathcal{P}_K$ intersects  a path non in  $ \mathcal{P}_K$ on the line $y_0$, then we can replace the horizontal part of those paths by the segment $[x_1,x_2]\times \{y_0\}$, getting an edge representation of the clique $K$. Thus, we can assume there exists
 at least one ${\displaystyle \lrcorner}$-path $P_{v} \in \mathcal{P}_K$ intersecting some path  $P_{t'} \notin \mathcal{P}_K$ on line $y_0$. Analogously, there exists
 at least one ${\displaystyle \llcorner}$-path $P_{v'} \in \mathcal{P}_K$ intersecting some path  $P_{t''} \notin K$ on line $y_0$, as depicted in Figure~\ref{fig:clawGrid}. Notice that vertex $t'$ cannot be adjacent to any of the vertices $t$, $v'$ or $t''$; and, in addition, vertex $t''$ cannot
 be adjacent to   $t$,  or $v$.
 
 Finally,   since $K$ is claw-clique,  there is a path $P_u \in \mathcal{P}_K$ covering the base of the claw. Depending on the 
 possibles adjacencies between  $u$ and $t'$ or  $t''$, one of the graphs  $S_{3}$, $S_{3'}$ or $S_{3''}$ is obtained.

\end{proof}

\begin{figure}
  \centering
  \begin{tabular}{  c p{0.7cm} c}
    \includegraphics[width=5.5cm]{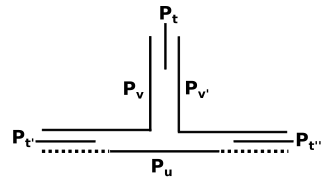} & &
    \includegraphics[width=3.5cm]{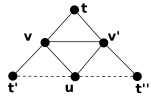}
    \\
    \footnotesize 
    (a)  \footnotesize Claw with paths. && \footnotesize (b) Subgraph induced by paths.\\
  \end{tabular}

 \caption{Reconstruction of the intersection model.}
 \label{fig:clawGrid}
\end{figure} 

Notice that any bull-free graph is $\{S_{3}, S_{3'}, S_{3''}\}$-free, so our previous result implies  Lemma 5 of  \cite{ries2009}.

\begin{figure}[htb]
  \centering
  \begin{tabular}{  c p{0.7cm} c }
    \centering
    \includegraphics[width=4cm]{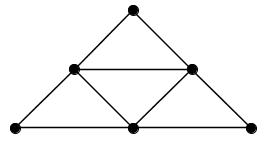} & &
    \includegraphics[width=4cm]{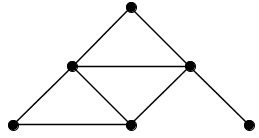}
    \\
    \footnotesize \centering 
    (a)  \footnotesize Graph $S_3$. &&  \footnotesize (b) Graph $S_{3'}$. \\
    
      \centering 
      \includegraphics[width=4cm]{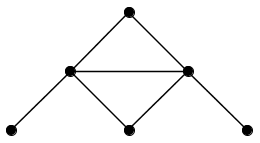} & &
    \includegraphics[width=3cm]{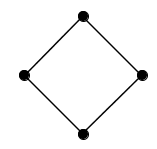}
    \\
    \footnotesize \centering 
    (c)  \footnotesize Graph $S_{3''}$. && \footnotesize (b) Graph $C_{4}$.\\
  \end{tabular}

 \caption{Graphs on the statement of Theorem \ref{lem:chordalDiamondFree}.}
 \label{fig:proibidos}
\end{figure} 


Next theorem has as consequence the identification of several graph classes where the existence of a $B_1$-EPG representation ensures the existence of a Helly-$B_1$-EPG representation.

\begin{theorem} \label{lem:b1DiamondFree}
 If $G$ is a $B_1$-EPG and diamond-free graph then $G$ is a Helly-$B_1$-EPG graph.
 \end{theorem}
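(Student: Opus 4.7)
The plan is to start from an arbitrary $B_1$-EPG representation of $G$ and modify it, one maximal claw-clique at a time, into a Helly-$B_1$-EPG representation. Suppose $K$ is a maximal clique represented as a claw-clique at center $C=(x_1,y_0)$ with claw edges $e_1,e_2$ (the horizontal base) and $e_3$ (the vertical edge). By Lemma~\ref{lem:b1epgDiamondFree} each of $e_1,e_2,e_3$ is covered only by vertices of $K$, so the paths of $K$ partition into the three nonempty types $H$, $LU$, $RU$ (covering $\{e_1,e_2\}$, $\{e_1,e_3\}$, $\{e_2,e_3\}$ respectively).

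The diamond-free hypothesis supplies two ingredients. First, any two maximal cliques of $G$ share at most one vertex, and hence any vertex $w\notin K$ adjacent to some $v\in K$ has $v$ as its unique neighbour inside $K$, for otherwise the triangle $\{w,v,v'\}$ would lie in a second maximal clique meeting $K$ in two vertices. Combining this single-neighbour property with the fact that all vertical parts of $K$-paths on column $x_1$ start at $y_0$, I plan to show that the types $LU$ and $RU$ cannot both contain a path with an external intersection on column $x_1$ strictly above $e_3$: if an external $P_u$ meets an $LU$ path at height $h$, every other $K$-path must stop before height $h+1$ on column $x_1$ (else $P_u$ would meet two $K$-vertices), forcing every $RU$ top $\le h$; the symmetric bound applied to a hypothetical external $P_{u'}$ meeting some $RU$ at height $h'$ then forces $LU$ top $\le h'\le h-1$, contradicting $LU$ top $\ge h+1$. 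Without loss of generality no $RU$-path has an external col-$x_1$ intersection.

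I then replace each $RU$-path $P_{v'}\in K$ by the horizontal path obtained from $P_{v'}$'s horizontal segment extended leftward through $C$ just enough to cover $e_1$, dropping its vertical segment. The extension creates no new external adjacency since $e_1$ is $K$-only, the deletion destroys no external adjacency by the previous paragraph, and intra-$K$ adjacencies persist through $e_1$ or $e_2$; so $K$ becomes an edge-clique on $e_1$. To see that no other maximal clique $K^{*}\ni v'$ has its representation type altered, I use $K\cap K^{*}=\{v'\}$ and argue by cases on where the common edge (or claw centre) of $K^{*}$ can lie: applying Lemma~\ref{lem:b1epgDiamondFree} to $K^{*}$ together with the single-neighbour property rules out placements on column $x_1$ above $e_3$, on the edges $e_1,e_2,e_3$, and at the same centre $C$, leaving only placements on $P_{v'}$'s horizontal segment strictly right of $C$, where the modified path still coincides with the original as far as $K^{*}$ is concerned.

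Iterating this modification across the finitely many maximal claw-cliques strictly decreases their number, introduces no new ones, and ends at a Helly-$B_1$-EPG representation of $G$. The step I expect to be the main obstacle is the last case analysis, showing that the local modification never turns any other maximal clique from edge- to claw-clique (or vice versa); this is precisely where the diamond-free hypothesis enters decisively, both through the single-neighbour property and through the uniqueness $|K\cap K^{*}|\le 1$ of intersections of maximal cliques in diamond-free graphs.
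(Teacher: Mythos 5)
Your proposal is correct, and its core mechanism coincides with the paper's: both invoke Lemma~\ref{lem:b1epgDiamondFree} to conclude that the three edges of the claw are covered only by paths of $K$, both flatten one of the two families of bent paths onto the base, and both use diamond-freeness to guarantee that at least one of those families has no external intersection on column $x_1$. The difference lies in how that last point is established and in the global bookkeeping. The paper does it in one stroke by contradiction: if a ${\lrcorner}$-path $P_v$ and a ${\llcorner}$-path $P_{v'}$ each meet a path outside $\mathcal{P}_K$ on column $x_1$, then (since their vertical segments are nested intervals starting at the center) one external path $P_t$ meets both, and $\{v,v',u,t\}$, with $P_u$ the path covering the base, induces a diamond. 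You reach the same conclusion from the unique-neighbour property of maximal cliques in diamond-free graphs together with the nesting of the vertical segments; these are two packagings of the same use of the hypothesis, and an external path adjacent to two vertices of $K$ is precisely the paper's diamond. What your write-up adds, and the paper leaves implicit, is the verification that the local surgery on one claw-clique does not alter the edge-/claw-type of any other maximal clique, so that iterating terminates in a representation where every maximal clique is an edge-clique; the paper phrases its argument as a contradiction extracted from a single clique and does not address this global step. Your sketch of that verification (via $|K\cap K^{*}|\le 1$ and the observation that the modified path agrees with the original on the entire horizontal segment where all external intersections of an $RU$-path must lie) is the right one, but it is the part that still needs to be written out in full for the proof to be complete.
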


\begin{proof}
If $G$ is not a Helly-$B_1$-EPG graph, then in each $B_1$-EPG representation of $G$, there is at least one clique that is represented as claw-clique and no as edge-clique.  Consider any $B_1$-EPG  representation of $G$  and let $K$ be a maximal clique  which is represented as a claw-clique. Assume, w.l.o.g,  $K$ is on a claw of the grid with base $[x_0, x_2]\times\{y_0\}$ and center $C = (x_1, y_0)$. Denote by  $\mathcal{P}_K$ the set of paths corresponding to the vertices of $K$. 
 By Lemma~\ref{lem:b1epgDiamondFree},  
the grid segment $[x_0, x_2]\times\{y_0\}$ is covered only by vertices of $K$. 
 For every ${\displaystyle \lrcorner}$-path 
 (resp. ${\displaystyle \llcorner}$-path 
 ) belonging to $\mathcal{P}_K$, we do the following: if 
 the path does not intersect any path $P_t \notin\mathcal{P}_K$ on column $x_1$, then we delete its vertical segment and add the grid segment $[x_1, x_2]\times\{y_0\}$ (resp. $[x_0, x_1]\times\{y_0\}$). If after this transformation there is no more ${\displaystyle \lrcorner}$-paths (resp. ${\displaystyle \llcorner}$-paths) in $\mathcal{P}_K$, then we are done since we have obtained an edge-clique. So we may assume that
 every ${\displaystyle \lrcorner}$-path   and every ${\displaystyle \llcorner}$-path  in $ \mathcal{P}_K$ intersects some path $P_t \notin \mathcal{P}_K$   on column $x_1$ (notice that we can assume is the same path $P_t$ for all the vertices). Since  $K$ is claw-clique,  there is a path $P_u \in \mathcal{P}_K$ covering the base of the claw. Thus, $G[v, v', u, t]$ induces a diamond,  a contradiction. 
\end{proof}  

An \textit{independent set} of vertices is a set of vertices no two of which are adjacent.
A graph $G$ is said to be \textit{Bipartite} if its set of vertices can be partitioned into two distinct independent sets.
 There are Bipartite graphs that are non $B_1$-EPG, for instance $K_{2,5}$ and $K_{3,3}$ (see~\cite{cohen2014}). Clearly , since
 bipartite graphs are triangle-free, any $B_1$-EPG representation of a bipartite graph is also a Helly-$B_1$-EPG representation.
 A similar result (but a bit weaker) is obtained as corollary of the previous theorem.

\begin{cor}
If $G$ is a Bipartite $B_1$-EPG graph then $G$ is a Helly-$B_1$-EPG graph.
\end{cor}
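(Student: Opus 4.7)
The plan is to derive this as an immediate consequence of Theorem \ref{lem:b1DiamondFree}. The key observation is that bipartite graphs are triangle-free, and since the diamond contains a triangle as an induced subgraph, every bipartite graph is automatically diamond-free. Therefore, any bipartite $B_1$-EPG graph satisfies the hypotheses of Theorem \ref{lem:b1DiamondFree}, which immediately gives the conclusion.

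Concretely, I would structure the argument in two short sentences: first note that a bipartite graph $G$ has no odd cycle, in particular no $C_3$, so $G$ is triangle-free; second, since a diamond contains two triangles, any triangle-free graph is diamond-free, so $G$ is diamond-free. Then applying Theorem \ref{lem:b1DiamondFree} finishes the proof.

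There is no real obstacle here, as the paper already observes in the surrounding discussion that any $B_1$-EPG representation of a bipartite graph is trivially Helly (since three pairwise edge-intersecting paths would induce a triangle). The corollary is just the packaging of this fact through the previously established diamond-free result. The only subtlety worth mentioning is that the statement is strictly weaker than the direct triangle-free argument alluded to in the text, but that is deliberate: it is presented as an illustration of the strength of Theorem \ref{lem:b1DiamondFree}.
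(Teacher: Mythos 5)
Your proposal is correct and follows exactly the same route as the paper: observe that bipartite graphs are diamond-free (you additionally spell out the triangle-free justification, which the paper leaves implicit) and then invoke Theorem~\ref{lem:b1DiamondFree}. Your remark that the direct triangle-free argument gives a slightly stronger conclusion also matches the paper's surrounding discussion.
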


\begin{proof}
The Bipartite graphs are diamond-free, thus by Theorem~\ref{lem:b1DiamondFree} these graphs are Helly-$B_1$-EPG graphs.
\end{proof}

A \textit{Block graph} or \textit{Clique Tree} is a type of graph in which every biconnected component (block) is a clique.

\begin{cor}\label{lem:cdf}
 Block graphs are Helly-$B_1$ EPG.
\end{cor}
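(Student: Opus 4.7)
The plan is to apply Theorem~\ref{lem:b1DiamondFree}, which states that any diamond-free $B_1$-EPG graph is Helly-$B_1$-EPG. Hence the work reduces to two tasks: verifying that block graphs are diamond-free, and showing that every block graph admits a $B_1$-EPG representation.

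Diamond-freeness is immediate from the structural definition. In a diamond on vertices $\{a,b,c,d\}$ with missing edge $ad$, the two triangles $\{a,b,c\}$ and $\{b,c,d\}$ share the edge $bc$, hence share two vertices. In a block graph distinct blocks share at most one vertex, so both triangles would have to sit inside a single block. But then this block would contain $\{a,b,c,d\}$ while being a clique, contradicting the non-adjacency of $a$ and $d$.

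For the $B_1$-EPG property, I would proceed by induction on the number of blocks. The base case is a single clique, realized as an edge-clique on one horizontal grid edge. In the inductive step, choose a leaf block $B$ of the block-cut tree with its unique cut vertex $c$; by induction, $G \setminus (B \setminus \{c\})$ has a $B_1$-EPG representation in which $c$ is a $1$-bend path $P_c$. I would extend $P_c$ by one fresh grid edge $e_B$ placed in a previously unused region of the infinite grid, and then draw every vertex of $B \setminus \{c\}$ as a short path covering only $e_B$. This produces an edge-clique representation of $B$ while keeping the whole collection $B_1$.

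The main delicate point is that the extension of $P_c$ must neither create a second bend nor introduce unwanted overlaps with existing paths. If $P_c$ is straight, one may introduce the single allowed bend; if $P_c$ already has a bend, one must continue along one of its straight components. To keep this operation always available, I would maintain as an induction invariant that every path has an endpoint pointing into an unobstructed direction of the grid, and choose the leaf block whose cut vertex has such a free endpoint. Once the $B_1$-EPG representation has been built, Theorem~\ref{lem:b1DiamondFree} immediately converts it into the desired Helly-$B_1$-EPG representation.
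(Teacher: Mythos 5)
Your overall route coincides with the paper's: show block graphs are diamond-free, show they are $B_1$-EPG, and invoke Theorem~\ref{lem:b1DiamondFree}. Your diamond-freeness argument is correct (the paper instead just quotes the characterization of block graphs as exactly the chordal diamond-free graphs). The divergence is in the second ingredient: the paper cites Theorem 19 of \cite{ries2009}, which already establishes that chordal diamond-free graphs are $B_1$-EPG, whereas you attempt a direct inductive construction. That construction, as written, has a genuine gap.

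The gap is in the invariant. Your induction must produce, for an arbitrary block graph, a representation that can absorb a further leaf block attached at \emph{any} of its vertices; but after you draw a leaf block $B$ with cut vertex $c$ by appending a single grid edge $e_B$ to $P_c$ and representing every vertex of $B\setminus\{c\}$ as a path equal to $e_B$, the new paths do not satisfy any usable version of ``has an endpoint pointing into an unobstructed direction.'' All of them coincide with one edge lying on the row (or column) of $P_c$, so each can reach a private fresh grid edge only by bending at one of the two endpoints of $e_B$ into one of at most four perpendicular directions: escaping along the row collides with $P_c$ and with the edges reserved for other blocks attached at $c$, and two paths that bend at the same endpoint into the same direction have both spent their single bend, so the longer one covers the private terminal edge of the shorter one and becomes falsely adjacent to the block attached there. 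Consequently a block graph such as $K_n$ with a pendant vertex attached to each of its $n$ vertices, $n\ge 7$, defeats the construction: at the stage of the recursion where that $K_n$ is a leaf block it is drawn on a single edge, and then $n-1$ of its paths each need a pairwise independent escape, which the at most four (or six) available directions cannot provide. A correct direct construction must let the paths of a clique block fan out along a shared segment, bending at pairwise distinct columns or rows, and maintaining that globally is precisely the bookkeeping your sketch omits. The immediate repair is the paper's: replace the construction by the citation to Theorem 19 of \cite{ries2009}, after which your application of Theorem~\ref{lem:b1DiamondFree} correctly finishes the proof.
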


\begin{proof}
Block graphs are known to be exactly the Chordal diamond-free graphs, so by   Theorem 19 of \cite{ries2009}, all Block graphs are  $B_1$-EPG. If follows from Theorem~\ref{lem:b1DiamondFree} that all Block graphs are Helly-$B_1$ EPG. 
 \end{proof} 

A \textit{Cactus} (sometimes called a Cactus Tree)  graph is a connected graph in which any two  cycles have at most one vertex in common. Equivalently, it is a connected graph in which every edge belongs to at most one  cycle, or (for nontrivial Cactus) in which every block (maximal subgraph without a cut-vertex) is an edge or a cycle. The family of graphs in which each component is a Cactus is closed under graph minor operations. This graph family may be characterized by a single forbidden minor, the diamond graph.
 
\begin{cor}
Cactus graphs are  Helly-$B_1$ EPG.
\end{cor}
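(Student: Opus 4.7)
The plan is to imitate exactly the argument of Corollary~\ref{lem:cdf}: verify that every Cactus graph is (i) diamond-free and (ii) $B_1$-EPG, and then invoke Theorem~\ref{lem:b1DiamondFree}.

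For (i), note that a diamond contains two triangles sharing an edge, while in a Cactus no edge belongs to more than one cycle; hence no Cactus contains a diamond, even as a non-induced subgraph. This is also the content of the characterization of Cactus graphs recalled just before the corollary, where the diamond appears as the single forbidden minor.

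For (ii), I would build a $B_1$-EPG representation by induction on the number of blocks of $G$. The base case is a single block, which is either an edge (trivially $B_1$-EPG) or an induced cycle $C_n$; for the latter one can place $n$ L-shaped paths around a rectangle of the grid so that consecutive paths overlap in exactly one grid edge. In the inductive step, choose a cut vertex $v$, list the blocks $B_1,\ldots,B_k$ through $v$, and place each $B_i$ on pairwise-disjoint regions of the grid. Let $P_v$ be a straight (zero-bend) horizontal path that traverses each region in a single grid edge, and inside each $B_i$ redraw the two neighbors of $v$ in $B_i$ as L-shaped paths that pick up that edge, while the rest of $B_i$ is drawn as in the base case. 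Since $P_v$ has no bend and the regions carrying distinct blocks are disjoint, the result is a $B_1$-EPG representation realizing exactly the edges of $G$.

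Once (i) and (ii) are both in hand, Theorem~\ref{lem:b1DiamondFree} immediately yields that $G$ is Helly-$B_1$-EPG. The main delicate point is step (ii): making $P_v$ share an edge with each neighbor of $v$ inside every block through $v$, without creating spurious adjacencies between paths from different blocks and without forcing $P_v$ to bend. Using disjoint horizontal strips for the blocks and letting cut-vertex paths run as straight segments across those strips, as sketched above, resolves this cleanly.
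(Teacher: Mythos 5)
Your overall framework is the same as the paper's: establish that cacti are diamond-free and $B_1$-EPG, then invoke Theorem~\ref{lem:b1DiamondFree}. Part (i) is fine and matches the paper. The difference is in part (ii): the paper does not construct anything, it simply cites \cite{cela2019monotonic}, where cacti are shown to be (monotonic) $B_1$-EPG. You instead attempt a self-contained construction, and that construction as written has a genuine gap.

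Concretely: you let $P_v$ traverse the region of each block $B_i$ through $v$ in a \emph{single} grid edge and then redraw the two neighbors of $v$ in $B_i$ as L-shaped paths that ``pick up that edge.'' If $B_i$ is a cycle of length at least $4$, those two neighbors of $v$ are non-adjacent in $G$, yet both of their paths now contain the same grid edge, so they share an edge of the grid and become adjacent in the represented graph --- a spurious chord of the cycle. You need $P_v$ to contribute at least two grid edges per such block, one for each neighbor, and then check that each neighbor can still reach its other cycle-neighbor with only one bend. A second, structural problem is that your induction is stated on the number of blocks but the inductive step only describes the star of blocks around a single cut vertex; in a general cactus a block $B_i$ contains further cut vertices whose own blocks must be attached, so drawing ``the rest of $B_i$ as in the base case'' does not close the induction. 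The standard repair is to peel off a leaf block of the block--cut tree and argue that its representation can be grafted onto the already-placed path $P_v$ without creating unwanted intersections, which requires showing $P_v$ always retains a free grid edge (or two) for this purpose. These issues are fixable, but as written step (ii) does not go through; citing \cite{cela2019monotonic}, as the paper does, avoids the difficulty entirely.
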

\begin{proof}
In~\cite{cela2019monotonic}, it is proved that every Cactus graph is a monotonic $B_1$-EPG graph 
(there is a $B_1$-EPG representation where all paths are ascending in rows and columns). 
Thus, Cactus graphs are $B_1$-EPG graphs. 

Since Cactus are diamond-free, by Theorem ~\ref{lem:b1DiamondFree}, the proof follows.
\end{proof}

Given a graph $G$, its \textit{Line graph} $L(G)$ is a graph such that each vertex of $L(G)$ represents an edge of $G$ and
  two vertices of $L(G)$ are adjacent if and only if their corresponding edges share a common endpoint (i.e. ``are incident'') in $G$.  
A graph $G$ is a \textit{Line graph of a Bipartite graph} (or simply \textit{Line of Bipartite}) if and only if it
contains no claw, no odd cycle (with more than 3 vertices), and no diamond as induced subgraph, \cite{harary1974line}.

In~\cite{daniel2014b} was proved that every Line graph has a representation with at most 2 bends. We proved in the following corollary that when restricted to the Line of Bipartite we can obtain a representation Helly and one-bended.

\begin{cor}\label{coro:lineOfBipartite}
 Line of Bipartite graphs are Helly-$B_1$ EPG. 
\end{cor}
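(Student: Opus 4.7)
The plan is to mirror the approach of the two preceding corollaries and reduce the statement to Theorem~\ref{lem:b1DiamondFree}. By the characterization cited just before the corollary, every Line of Bipartite graph is diamond-free, so it suffices to produce a $B_1$-EPG representation of such a graph; the Helly property then follows automatically from Theorem~\ref{lem:b1DiamondFree}.

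For the $B_1$-EPG part I would not invoke the $B_2$-EPG result of~\cite{daniel2014b}, but instead give a direct L-path construction. Write $G = L(H)$ with $H = (A \cup B, E(H))$ bipartite, $A = \{a_1, \dots, a_m\}$, $B = \{b_1, \dots, b_n\}$. In a grid containing the rectangle $[0, m] \times [0, n]$, to each edge $e = (a_i, b_j) \in E(H)$ I would associate the single-bend path $P_e$ formed by the vertical segment $\{i\} \times [0, j]$ followed by the horizontal segment $[0, i] \times \{j\}$, bending exactly once at $(i, j)$.

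The verification that this gives an EPG representation of $L(H)$ is the core of the argument, but it is routine. Two edges sharing the endpoint $a_i$ give paths whose vertical portions both lie on column $i$ and both contain the grid edge from $(i, 0)$ to $(i, 1)$; two edges sharing $b_j$ analogously share the horizontal edge from $(0, j)$ to $(1, j)$; and two edges that are vertex-disjoint in $H$ produce paths whose vertical parts lie on distinct columns and whose horizontal parts lie on distinct rows, so they can meet at most at a single grid vertex and share no grid edge. Once this is checked, Theorem~\ref{lem:b1DiamondFree} upgrades the representation to the desired Helly-$B_1$-EPG one.

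There is no serious obstacle: the only point requiring care is the coordinate bookkeeping in the last intersection case, where a vertical segment of one path and a horizontal segment of another meet at a single grid vertex (a crossing) rather than overlapping along an edge. As a side remark, the representation just described is in fact already Helly, because $H$ is triangle-free, so every maximal clique of $L(H)$ corresponds to the full set of edges at some vertex of $H$ and is realized as an edge-clique on $(i, 0)$-$(i, 1)$ or $(0, j)$-$(1, j)$; this stronger observation is not needed for the corollary but confirms that the construction is sharp.
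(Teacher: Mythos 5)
Your reduction is correct and, in its final step, identical to the paper's: Line of Bipartite graphs are diamond-free (claw-, odd-cycle-, diamond-free by the Harary--Holzmann characterization quoted just above the corollary), so Theorem~\ref{lem:b1DiamondFree} applies once $B_1$-EPG membership is known. Where you diverge is in how you get that membership: the paper simply cites \cite{golumbic2018edge}, whereas you give an explicit single-bend representation of $L(H)$ for bipartite $H$, assigning to the edge $(a_i,b_j)$ the L-path through $(i,j)$ with foot at $(i,0)$ and $(0,j)$. I checked the three intersection cases and they are all sound: paths for edges sharing $a_i$ overlap on the grid edge $(i,0)$--$(i,1)$, paths for edges sharing $b_j$ overlap on $(0,j)$--$(1,j)$, and paths for disjoint edges of $H$ can only cross at a single grid point, never along a grid edge, since a vertical and a horizontal segment share no grid edge and the columns (resp.\ rows) used are distinct. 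Your construction buys self-containedness and, as you note, something strictly stronger than what Theorem~\ref{lem:b1DiamondFree} delivers: every maximal clique of $L(H)$ is a star of $H$ (as $H$ is triangle-free) and is realized as an edge-clique on one of the two grid edges above, so the representation you build is already Helly without any appeal to the theorem. The paper's citation-based route is shorter and fits the uniform pattern of the surrounding corollaries, but yours is a legitimate and arguably more informative alternative.
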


\begin{proof}
Line of Bipartite graphs were proved to be $B_1$-EPG in~\cite{golumbic2018edge}. Since they are diamond-free, the proof follows from Theorem~\ref{lem:b1DiamondFree}.
 
\end{proof}

The diagram of Figure~\ref{fig:diagram}
illustrates the containment relationship between the graph classes  studied so far in this work. 
We list in Figure~\ref{fig:exemplosDiagram} examples of graphs in each numbered region of the diagram. The numbers of each item below correspond to the regions of the same number in the diagram depicted in Figure~\ref{fig:diagram}.


\begin{enumerate}[label=(\arabic*)]
    \item ($B_1$-EPG)  - (Helly-$B_1$-EPG) graphs, depicted in Figure~\ref{fig:exemplosDiagram}(a), graph $E_1$;
    
    \item (Line of Bipartite)  - (Cactus) - (Block) - (Bipartite) graphs, depicted in Figure~\ref{fig:exemplosDiagram}(b), graph $E_2$;
    \item (Helly-$B_1$ EPG) - (Line of Bipartite) - (Block) - (Cactus) - (Bipartite) graphs, depicted in Figure~\ref{fig:exemplosDiagram}(c), graph $E_3$;
    \item (Block) $\cap$ (Line of Bipartite) - (Cactus) - (Bipartite), depicted in Figure~\ref{fig:exemplosDiagram}(d), graph $E_4$;
    \item (Block) $\cap$ (Line of Bipartite) $\cap$  (Cactus) - (Bipartite), depicted in Figure~\ref{fig:exemplosDiagram}(e), graph $E_5$;
    \item (Cactus) $\cap$ (Line of Bipartite) - (Block) - (Bipartite). This intersection is empty. Let $G$ be a graph that is Cactus and Line of Bipartite then $G$ is $\{$claw, odd cycle, diamond$\}$-free. But $G$ is not a Bipartite graph, then $G$ has odd cycle, 
     absurd with the hypothesis of $G$ is Line of Bipartite;
    \item (Bipartite) $\cap$ (Line of Bipartite)  - (Cactus) - (Block) graphs, depicted in Figure~\ref{fig:exemplosDiagram}(f), graph $E_7$;
    \item (Bipartite) $\cap$ (Line of Bipartite) $\cap$  (Cactus) - (Block) graphs, depicted in Figure~\ref{fig:exemplosDiagram}(g), graph $E_8$;
    \item (Bipartite) $\cap$ (Line of Bipartite) $\cap$  (Cactus) $\cap$ (Block) graphs, depicted in Figure~\ref{fig:exemplosDiagram}(h), graph $E_9$;
  \item (Bipartite) $\cap$  (Cactus) $\cap$ (Block) - (Line of Bipartite) graphs, depicted in Figure~\ref{fig:exemplosDiagram}(i), graph $E_{10}$;
    \item (Bipartite)  $\cap$  (Cactus) - (Block) -  (Line of Bipartite) graphs, depicted in Figure~\ref{fig:exemplosDiagram}(j), graph $E_{11}$;
     \item (Bipartite) $\cap$ (Helly-$B_1$ EPG) - (Cactus) - (Block) -  (Line of Bipartite) graphs, depicted in Figure~\ref{fig:exemplosDiagram}(k), graph $E_{12}$;
      \item (Bipartite) - ($B_1$-EPG) graphs, depicted in Figure~\ref{fig:exemplosDiagram}(l), graph $E_{13}$;
      \item (Block) - (Bipartite) - (Line of Bipartite)  - (Cactus) graphs, depicted in Figure~\ref{fig:exemplosDiagram}(m), graph $E_{14}$;
 
      \item (Block) $\cap$  (Cactus) -  (Line of Bipartite) - (Bipartite) graphs, depicted in Figure~\ref{fig:exemplosDiagram}(n), graph $E_{15}$;
      \item (Cactus) - (Block) -  (Line of Bipartite) - (Bipartite) graphs, depicted in Figure~\ref{fig:exemplosDiagram}(o), graph $E_{16}$, the odd cycles $C_{2n+1},n\geq 2$;
      \item (Helly EPG) - ($B_1$-EPG)  - (Bipartite) graphs, depicted in Figure~\ref{fig:exemplosDiagram}(p), graph  $E_{17}$;
\end{enumerate}

 \begin{figure}[htb]	
 \center
 \includegraphics[width=8cm]{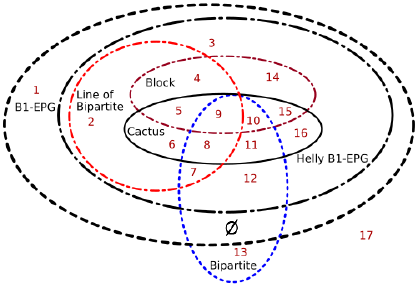}
 \caption{Diagram of some graph classes.}
\label{fig:diagram}
\end{figure}  

 \begin{figure}[htb]	
 
   \centering
  \begin{tabular}{  c c c c }
    \includegraphics[width=3cm]{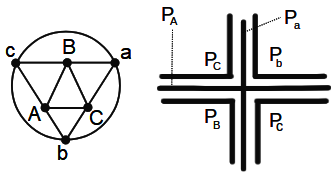} 
    & 
    \includegraphics[width=1.5cm]{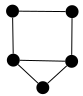} 
    & 
    \includegraphics[width=2cm]{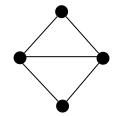} 
    & 
    \includegraphics[width=1.5cm]{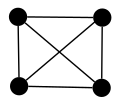} 
    \\
    \footnotesize (a)  Graph $E_1$. 
    & 
    \footnotesize (b) Graph $E_2$.
    & 
    \footnotesize (c) Graph $E_3$.
    & 
    \footnotesize (d) Graph $E_4$.
    \\
    \includegraphics[width=2cm]{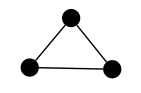} 
    &
    \includegraphics[width=2.5cm]{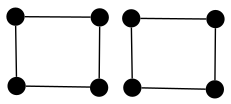} 
    & 
    \includegraphics[width=1.5cm]{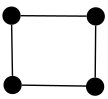} 
    & 
    \includegraphics[width=1.8cm]{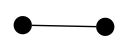} 
    \\ 
    
    \footnotesize (e) Graph $E_5$.
    & \footnotesize (f)   Graph $E_7$. 
    & 
    \footnotesize (g) Graph $E_8$.
    & 
    \footnotesize (h) Graph $E_9$.
    \\
    \includegraphics[width=1.5cm]{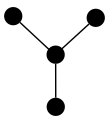} 
    & 
    \includegraphics[width=1.8cm]{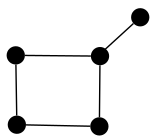} 
    &
    \includegraphics[width=2.5cm]{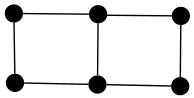} 
    & 
    \includegraphics[width=2cm]{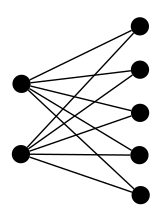} 
    \\ 
    \footnotesize (i) Graph $E_{10}$.
    & 
    \footnotesize (j) Graph $E_{11}$.
    &
    \footnotesize (k)  Graph $E_{12}$. 
    & 
    \footnotesize (l) Graph $E_{13}$.
    \\
   \includegraphics[width=2cm]{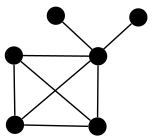} 
    & 
    \includegraphics[width=1.8cm]{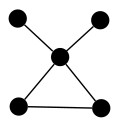} 
    & 
    \includegraphics[width=1.8cm]{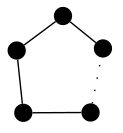} 
    &
    \includegraphics[width=2.5cm]{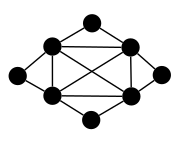}
    \\
    \footnotesize (m) Graph  $E_{14}$.
    & 
    \footnotesize (n) Graph $E_{15}$.
    & 
    \footnotesize (o)  Graph $E_{16}$,  $C_{2n+1},n\geq2$.
    &
    \footnotesize (p)  Graph $E_{17}$.
  
  \end{tabular}
 \caption{The set of instances for the Venn Diagram 
 on  Figure~\ref{fig:diagram}.}
 \label{fig:exemplosDiagram}
\end{figure}

In next section we explore the Chordal $B_1$-EPG graphs through of a subset of forbidden graphs and we will proof that this class is in the strict intersection of VPT and EPT graphs.

\section{Containment relationship among Chordal $B_1$-EPG, VPT and EPT graphs }

 Any graph that
admits a $B_1$-EPG representation  whose paths do not cover all the edges of a polygon of the grid (i.e.
the subjacent grid subgraph is a tree)  is also an EPT graph: the same representation is both $B_1$-EPG and $EPT$.
However, it is easily verifiable that the subjacent grid subgraph of any $B_1$-EPG representation of a cycle $C_n$ with $n\geq 5$ is not a tree,
although $C_n$ is an  EPT graph.  Our long-rage goal is 
understanding the $B_1$-EPG graphs that are also EPT graphs. When can a $B_1$-EPG representation
be reorganized into an EPT representation?  In this section,
 we answer that question for Chordal $B_1$-EPG graphs, in fact we prove that every Chordal $B_1$-EPG graph is EPT. We
 made several unsuccessful attempts to prove this result by considering for a graph $G$, a $B_1$-EPG representation whose paths cover all the edges
 of some polygon on the grid, and trying  to show  that if none of the paths could be modified in order to avoid an edge of the polygon,
 then $G$ had some chordless  cycle (i.e. $G$ is not chordal). The surprise was that the only way we found to demonstrate our main Theorem \ref{teo:b1epgept} was through $VPT$ graphs.
 We will prove the following theorem.

\begin{theorem}\label{teo:chordalB1inVPT}
Chordal $B_1$-EPG $\subsetneq$ VPT. 
\end{theorem}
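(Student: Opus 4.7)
The plan is to prove the inclusion $\text{Chordal }B_1\text{-EPG}\subseteq\text{VPT}$ and then exhibit a VPT graph that is not $B_1$-EPG (such a graph is automatically chordal, since VPT graphs are chordal, so it will witness strictness).

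For the inclusion, let $G$ be a chordal $B_1$-EPG graph and fix a $B_1$-EPG representation $\mathcal{R}$ on a grid $Q$. Write $Q_{\mathcal{R}}$ for the subgraph of $Q$ consisting of those grid edges covered by at least one path of $\mathcal{R}$. The clean case is when $Q_{\mathcal{R}}$ is already a tree: subdividing every edge of $Q_{\mathcal{R}}$ once produces a host tree $T$ in which each path of $\mathcal{R}$ becomes a path of $T$ passing through the subdivision vertex associated with every grid edge it originally covered, and two paths share a vertex of $T$ if and only if they share an edge of $Q_{\mathcal{R}}$. Thus the goal is to modify $\mathcal{R}$ until $Q_{\mathcal{R}}$ becomes a forest, and then restrict to a spanning tree.

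The crux is a polygon-release step: given a polygon $\Pi$ of $Q_{\mathcal{R}}$, I would show that some edge $e$ of $\Pi$ can be removed from every path covering it, via truncation or a re-routing inside $\Pi$, without gaining or losing any pairwise edge-intersection. Lemma~\ref{lem:twoClawNotSameCenterInChordal} already forbids any pie centered at the corners of $\Pi$, and chordality of $G$ forbids a longer cyclic edge-sharing arrangement of paths around $\Pi$: any such arrangement would produce an induced $C_k$ with $k\geq 4$ in $G$. I expect a finite case analysis on the possible shapes of the at-most-one-bend paths that can cover the four edges and four corners of $\Pi$, in which each ``stuck'' configuration around $\Pi$ is shown to be either equivalent to a pie or to exhibit a chordless cycle in $G$, and hence be excluded. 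Iterating the release step over the polygons of $Q_{\mathcal{R}}$ reduces it to a forest, after which the subdivision argument above yields the VPT representation. The main obstacle is executing this case analysis cleanly while maintaining the invariant that the underlying intersection graph is preserved at every intermediate step; the bend-count restriction and Lemma~\ref{lem:twoClawNotSameCenterInChordal} are what keep the analysis finite and tractable.

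For strictness, I only need a VPT graph that is not $B_1$-EPG. I would take a chordal graph already known in the literature to admit no $B_1$-EPG representation, for example a sufficiently large $k$-sun $S_k$ or a similar structured chordal graph from the $B_1$-EPG lower-bound literature, and verify that it is VPT by exhibiting an explicit representation as paths on a suitable host tree (a subdivided star works for suns). Such a graph lies in VPT but outside Chordal $B_1$-EPG, establishing the proper containment.
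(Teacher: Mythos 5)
Your proposed route --- repairing a $B_1$-EPG representation until the covered grid subgraph is a forest and then reading it off as a VPT representation --- is precisely the approach the authors report they attempted and could not complete; they state explicitly that their ``several unsuccessful attempts'' consisted of trying to show that a polygon whose edges cannot be released forces a chordless cycle, and that the only proof they found goes through the forbidden-subgraph characterization of VPT. The entire mathematical content of the theorem sits inside your ``polygon-release step,'' which you only assert (``I would show\dots'', ``I expect a finite case analysis\dots''). Nothing in the paper, and nothing in your sketch, supplies that case analysis, and there is no reason to believe it is finite or tractable in the form you describe: a polygon of the covered subgraph can be large (not a unit square), the paths obstructing an edge of one polygon may be forced into another polygon by the re-routing, and Lemma~\ref{lem:twoClawNotSameCenterInChordal} only controls claws at a single grid point, not the global cyclic structure around a polygon. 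As written, the inclusion is not proved.

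There is also a flaw in your ``clean case.'' Subdividing every covered grid edge and keeping the same paths does not yield a VPT representation of $G$: two paths of the grid may share a grid \emph{point} without sharing a grid edge (e.g.\ crossing at, or meeting at, a degree-4 point of the grid), and after subdivision they still share that original vertex of the host tree, so they would be adjacent in the resulting vertex-intersection graph while being non-adjacent in $G$. Your claimed equivalence ``two paths share a vertex of $T$ if and only if they share an edge of $Q_{\mathcal R}$'' is false. This is exactly why the conversion from edge-intersection to vertex-intersection on trees is delicate: it holds for host trees of maximum degree $3$ (the Golumbic--Jamison theorem the paper invokes in Theorem~\ref{teo:b1epgept}), but the grid has degree-4 vertices. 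The paper avoids both problems by instead taking the L\'ev\^eque--Maffray--Preissmann list of minimal chordal non-VPT graphs ($F_1,\dots,F_{16}$) and showing each is not $B_1$-EPG, using Lemmas~\ref{l:AT-free}, \ref{l:branch} and \ref{l:abclique}. Your strictness argument via a sun such as $S_4$ does agree with the paper's.
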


In~L{\'e}v{\^e}que et al. \cite{leveque2009characterizing} apud \cite{alcon2015characterizing},  VPT graphs were characterized by a family of minimal forbidden induced subgraphs,
the ones depicted in 
Figure~\ref{fig:16proibidos} plus the induced cycles $C_n$ for $n\geq 4$. Therefore, in order to prove
that Chordal $B_1$-EPG graphs are VPT is enough to show that none of the graphs in Figure~\ref{fig:16proibidos} 
is $B_1$-EPG. 

First notice that in each one of the graphs $F_{1}, F_{2}, F_{3}, F_{4}$ and $F_{5}$ ( Figures~\ref{fig:16proibidos}(a), (b), (c), (d), (e), respectively), the neighborhood of the universal vertex (the one that is a bit bigger than the others, in the respective figures) contains an asteroidal triple. Therefore, by Lemma \ref{l:AT-free}, these graphs are not  $B_1$-EPG.

Now, in each one of the graphs $F_{11}, F_{12}, F_{13}, F_{14}$, $F_{15}$ and $F_{16}$  (Figures~\ref{fig:16proibidos}(k), (l), (m), (n), (o), (p), respectively), let $C$ be the maximal clique in bold. It is easy to check that, in all cases, the branch graph $B(G|C)$ contains an induced cycle $C_n$, for some $n\geq 4$, or an induced path $P_6$; thus, by Lemma \ref{l:branch},  graphs $F_{11}, F_{12}, F_{13}, F_{14}$, $F_{15}$ and $F_{16}$ are not $B_1$-EPG.



\begin{obs} \label{f:between}Let $e_{\ell}$, $e_m$ and $e_r$  be three distinct edges of a  one-bend path $P$, and assume that $e_m$ is between $e_{\ell}$ and $e_r$ on $P$. If $P_{\ell}$ and $P_r$ are one-bend paths such that: $P_{\ell}$ contains $e_{\ell}$, $P_r$ contains $e_r$, and  $P_{\ell}$ and $P_r$ intersect in at least one edge, then $P_{\ell}$ or $P_r$ contains $e_m$.
\end{obs}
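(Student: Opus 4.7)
The plan is to argue by contradiction. Assume $e_m \notin P_\ell \cup P_r$ and pick an edge $e^* \in P_\ell \cap P_r$; I will then rule out every possible position for $e^*$. Since $P$ has at most one bend, the sub-path of $P$ from $e_\ell$ to $e_r$ (which passes through $e_m$) is either straight (if the bend of $P$ lies outside it, or $P$ has no bend at all) or it contains the single bend of $P$. This gives two main cases.

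In the straight case I may assume, after rotation/reflection, that $e_\ell$, $e_m$, $e_r$ are horizontal edges on a common line $y = y_0$, with $e_\ell$ strictly to the left of $e_m$ and $e_r$ strictly to the right. The horizontal segments of $P_\ell$ and $P_r$ both lie on $y = y_0$; since they contain $e_\ell$ and $e_r$ respectively but not $e_m$, they are confined to opposite sides of $e_m$ and therefore cover disjoint sets of edges, so $e^*$ cannot be horizontal. A vertical segment of $P_\ell$ attaches at an endpoint of its horizontal segment and hence lies on a vertical grid line strictly to the left of $e_m$, while a vertical segment of $P_r$ lies on a vertical line strictly to the right of $e_m$. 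These lines are distinct, so $e^*$ is not vertical either, a contradiction.

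In the bend case I place the bend of $P$ at $(x_0, y_0)$ with the horizontal arm at $y = y_0$ running left of $x_0$ and the vertical arm at $x = x_0$ running upward from $y_0$. There are two subcases: either $e_m$ lies on the horizontal arm (so $e_\ell, e_m$ are horizontal and $e_r$ is vertical), or $e_m$ lies on the vertical arm (so $e_\ell$ is horizontal and $e_m, e_r$ are vertical). In both, $P_\ell$'s horizontal segment sits on $y = y_0$ and $P_r$'s vertical segment sits on $x = x_0$. For the shared edge $e^*$: if $e^*$ is horizontal, then $P_r$ must bend and place its horizontal arm on $y = y_0$; inspecting how far that arm can reach shows that either it must contain $e_m$ (contradicting $e_m \notin P_r$) or it fails to overlap $P_\ell$'s horizontal segment. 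If $e^*$ is vertical, then $P_\ell$ must bend at $(x_0, y_0)$ and extend its vertical arm upward; for it to overlap $P_r$'s vertical arm it has to reach past $e_m$, forcing $e_m \in P_\ell$, or (in the other subcase) the two vertical arms end up on disjoint intervals.

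The main obstacle is not conceptual but bookkeeping: I need to translate the hypothesis $e_m \notin P_\ell \cup P_r$ into precise inequalities on the endpoints of the horizontal segment of $P_\ell$ and the vertical segment of $P_r$, so that each candidate location for $e^*$ is either forbidden or forces $e_m$ into one of the two paths. Once those inequalities are fixed, each of the (few) cases closes in one or two lines.
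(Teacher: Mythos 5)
The paper states this as an unproved \emph{Observation}: no argument is given, the claim being treated as self-evident. Your proposal therefore cannot be compared to a paper proof, but judged on its own it is correct and the case decomposition is the natural one. The two structural facts you rely on throughout --- that a one-bend path has at most one maximal horizontal and one maximal vertical segment, and that these meet at a common endpoint --- are exactly what makes every case close: in the straight case they force the horizontal segments of $P_{\ell}$ and $P_r$ onto opposite sides of $e_m$ on the line $y=y_0$ and their vertical segments onto distinct columns; in the bend case they force any segment of $P_{\ell}$ or $P_r$ that could meet the other path to first traverse $e_m$. The only point worth stating explicitly when you write out the ``bookkeeping'' is the reduction you use implicitly: since the conclusion is symmetric in $P_{\ell}$ and $P_r$, you may assume $e_{\ell}$ lies on the horizontal arm of $P$, which is what collapses the bend case to your two subcases. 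With that said, each subcase does close in the one or two lines you predict, so the proposal is a complete and correct proof of a statement the paper leaves unjustified.
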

\begin{obs} \label{f:two points} Let  $e$ and $q$  be an edge and a  point  of a  one-bend path $P$, respectively. If a one-bend path $P'$ contains both $e$ and $q$, then $P'$ contains the whole segment of $P$ between $q$ and $e$.
\end{obs}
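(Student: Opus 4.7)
The plan is to exploit the rigid structure of a one-bend path: each such path consists of at most two maximal straight pieces (one horizontal, one vertical) meeting at a single bend. Let $a,b$ denote the endpoints of $e$, and let $S$ be the sub-path of $P$ from $q$ to $b$ (through $a$)---the ``segment of $P$ between $q$ and $e$'' to be located inside $P'$. The argument splits on whether the bend of $P$ lies in the interior of $S$.

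When $S$ is straight (no bend of $P$ in its interior), I would take, WLOG, $S$ on a row $y=y_{0}$, so $e$ is horizontal. Since $P'$ contains $e$, the horizontal piece of $P'$ also lies on row $y_{0}$ and contains $e$. A short argument rules out that $q$ sits on the vertical piece of $P'$: its $y$-coordinate equals $y_{0}$, so the only way to be on the vertical piece is to be the bend of $P'$, which already lies on the horizontal piece. Hence $q$ is on the horizontal piece of $P'$, and because that piece is a straight segment containing $q$ and $b$, it necessarily contains every grid edge between them---namely $S$.

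When $S$ contains the bend $\beta$ of $P$ in its interior, I would adopt the WLOG convention that the portion $q\to\beta$ is horizontal on row $y_{0}$ and the portion $\beta\to b$ is vertical on column $x_{0}$; in particular $e$ is vertical. Then the vertical piece of $P'$ lies on column $x_{0}$ and contains $e$, while $q$ (having $x$-coordinate different from $x_{0}$) must sit on the horizontal piece of $P'$, which therefore lies on row $y_{0}$. The bend of $P'$ is the common point of its two pieces, so it equals $(x_{0},y_{0})=\beta$. Consequently $P'$ contains both the horizontal stretch from $q$ to $\beta$ and the vertical stretch from $\beta$ to $b$, and their union is exactly $S$.

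The main obstacle I anticipate is bookkeeping the degenerate configurations: $a=\beta$ (the edge $e$ starts at the bend of $P$), $q=\beta$ (which collapses the second case into the first), or $P'$ being a straight path with no bend. None of these should pose a genuine difficulty, but the WLOG conventions on horizontal/vertical orientation need to be stated carefully so that no configuration slips through.
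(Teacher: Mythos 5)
Your argument is correct and complete: the case split on whether the bend of $P$ lies in the interior of the segment $S$, together with the observation that a one-bend path containing a horizontal (resp.\ vertical) grid edge must carry that edge on its horizontal (resp.\ vertical) piece, pins down where $q$ and the bend of $P'$ must sit and forces $S\subseteq P'$. The paper states this as an Observation with no proof at all, so there is nothing to compare against; your write-up simply supplies the routine verification the authors left implicit, and the degenerate configurations you flag ($q=\beta$, $a=\beta$, straight $P'$) are indeed all absorbed by your two cases.
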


\begin{lemma}\label{l:abclique}
Let $G$ be a graph whose vertex set  can be partitioned into a clique $K=\{a,b\}$ and an independent set $I=\{x,y,z\}$, such that each vertex of $K$ is adjacent to each vertex of $I$.
If in a given $B_1$-EPG representation of $G$, $P_a\cap P_y$ is between $P_a\cap P_x$ and $P_a\cap P_z$, then $\{a,b,y\}$ is an edge-clique, and
$P_a\cap P_y \subset P_b$. Even more, any vertex adjacent to both $a$ and $y$, but not to $b$ (or to $b$ and $y$, but not to $a$) has to be adjacent to $x$ or to $z$.
\end{lemma}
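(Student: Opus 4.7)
The plan is to fix representative edges $e_x\in P_a\cap P_x$, $e_y\in P_a\cap P_y$, $e_z\in P_a\cap P_z$, which by hypothesis lie on $P_a$ with $e_y$ strictly between $e_x$ and $e_z$, and to apply Observation~\ref{f:between} repeatedly to obtain the conclusions. The proof divides naturally into three steps: (a) showing $e_y\in P_b$, so that $\{a,b,y\}$ is an edge-clique; (b) upgrading this to $P_a\cap P_y\subset P_b$; and (c) establishing the two ``Even more'' statements for an additional vertex $w$.

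For step (a), assume for contradiction $e_y\notin P_b$. Then $P_a\cap P_b$, a connected subpath of $P_a$ avoiding $e_y$, lies entirely on one side of $e_y$; WLOG the $x$-side. Pick any $e_\ell\in P_a\cap P_b$: on $P_a$, $e_y$ lies between $e_\ell$ and $e_z$, and Observation~\ref{f:between} applied with $P_\ell=P_b$ and $P_r=P_z$ (which share an edge since $b$ is adjacent to $z$) forces $e_y\in P_b$ or $e_y\in P_z$; the second possibility would give $y$ adjacent to $z$, a contradiction, so $e_y\in P_b$. For step (b), take any $f\in P_a\cap P_y$. Since $P_a\cap P_y$ is a connected subpath of $P_a$ containing both $f$ and $e_y$ but avoiding $e_x$ and $e_z$ (because $x,z$ are not adjacent to $y$), the edge $f$ lies strictly between $e_x$ and $e_y$, or strictly between $e_y$ and $e_z$, on $P_a$. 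In the first case, Observation~\ref{f:between} with triple $(e_x,f,e_y)$ and paths $P_x,P_b$ (edge-sharing because $x$ is adjacent to $b$ and $e_y\in P_b$) forces $f\in P_x$ or $f\in P_b$; the former would give $x$ adjacent to $y$, so $f\in P_b$. The second case is symmetric.

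For step (c), first let $w$ be adjacent to $a$ and $y$ but not $b$. Since $P_a\cap P_y\subset P_b$ and $P_w\cap P_b$ is edge-empty, the subpath $P_a\cap P_w$ is edge-disjoint from $P_a\cap P_y$ and hence lies on one side of $P_a\cap P_y$ on $P_a$; WLOG the $x$-side. If it overlaps with $P_a\cap P_x$, then $w$ is adjacent to $x$ directly; otherwise, for any $e_w\in P_a\cap P_w$, the order on $P_a$ is either $e_x,e_w,e_y$ or $e_w,e_x,e_y$. In the first subcase, Observation~\ref{f:between} with triple $(e_x,e_w,e_y)$ and paths $P_x,P_b$ forces $e_w\in P_x$ (since $e_w\in P_b$ would contradict $w$ not adjacent to $b$), giving the adjacency $wx$; in the second subcase, Observation~\ref{f:between} with triple $(e_w,e_x,e_y)$ and paths $P_w,P_y$ (edge-sharing since $w$ is adjacent to $y$) forces $e_x\in P_w$ (since $e_x\in P_y$ would give $x$ adjacent to $y$), again giving $wx$. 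Symmetrically, if $P_a\cap P_w$ lies on the $z$-side, then $w$ is adjacent to $z$.

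For the symmetric sub-statement (where $w$ is adjacent to $b$ and $y$ but not to $a$), the plan is to rerun the argument with $a$ and $b$ interchanged, which requires first showing that $P_b\cap P_y$ lies between $P_b\cap P_x$ and $P_b\cap P_z$ on $P_b$. This is the main technical obstacle, since the hypothesis is stated only for $P_a$. I plan to prove the symmetric property by contradiction: if, say, $P_b\cap P_x$ lies strictly between $P_b\cap P_y$ and $P_b\cap P_z$ on $P_b$, then steps (a)--(b) applied with roles $(a,y)\leftrightarrow(b,x)$ yield $P_b\cap P_x\subset P_a$ and $\{a,b,x\}$ an edge-clique, and a further application of Observation~\ref{f:between} on $P_b$, using an edge $g'\in P_w\cap P_b$ (which must avoid $P_a\cap P_b$ since $P_w\cap P_a$ is edge-empty) together with the intersections of $P_b$ with $P_x$ and $P_z$, forces the adjacency of $w$ to $x$ or to $z$. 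The delicate bookkeeping of the relative positions of the pairwise intersections on $P_a$ and $P_b$ simultaneously is the most involved part of the argument.
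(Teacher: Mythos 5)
Your steps (a), (b) and the first half of (c) are correct and complete, and they take a genuinely different route from the paper. The paper first invokes Lemma~\ref{lem:3cliquesNotClaw} to get that some clique $\{a,b,\cdot\}$ is an edge-clique and then transfers this to $\{a,b,y\}$ via Observation~\ref{f:between}, and it handles the ``even more'' clause by analysing the claw-clique $\{a,y,v\}$, its center $q$, and the edge of $P_a$ at $q$ outside $P_y$. You instead exploit that the edge-intersection of two one-bend paths is a connected subpath (worth stating explicitly, since the paper never does), place $P_a\cap P_b$ on one side of $e_y$, and run Observation~\ref{f:between} directly; your treatment of a vertex $w$ adjacent to $a$ and $y$ but not $b$ via the position of $P_a\cap P_w$ is cleaner than the paper's claw analysis and avoids Lemma~\ref{lem:3cliquesNotClaw} entirely.

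The genuine gap is the parenthetical case ($w$ adjacent to $b$ and $y$ but not to $a$), which you explicitly leave as a plan rather than a proof. (To be fair, the paper's own proof also treats only the first case and is silent on this one; but since the hypothesis is asymmetric in $a$ and $b$, the case does require an argument.) Two concrete problems with your sketch. First, the final application of Observation~\ref{f:between} ``using the intersections of $P_b$ with $P_x$ and $P_z$'' is not licensed: the observation requires the two flanking paths to share an edge, and $P_x$ and $P_z$ do not ($x\not\sim z$); you would have to route the argument through $P_a$ or $P_y$ instead (e.g.\ flank with $P_y$ and $P_w$, which do share an edge), and even then the sub-case where $P_w\cap P_b$ falls on the $y$-side of $P_b\cap P_x$ yields no information. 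Second, and more fundamentally, the reduction rests on the unproved claim that $P_b\cap P_y$ is between $P_b\cap P_x$ and $P_b\cap P_z$ on $P_b$. This claim is true and can be established by contradiction: if, say, $P_b\cap P_x$ were in the middle, the swapped first part gives an edge $g\in P_a\cap P_b\cap P_x$; both $P_a$ and $P_b$ then contain $g$ together with an edge $h$ of $P_a\cap P_y$ and hence the same segment $S$ between them, $P_a$ continues past $h$ to meet $P_z$ while $P_b$ continues past $g$ to meet $P_z$, and a single-bend $P_z$ joining those two locations is forced to cover $g\in P_x$ or to occupy two distinct columns, contradicting $x\not\sim z$ or the one-bend hypothesis. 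Some such geometric closing argument is needed; as written, your proposal does not contain it.
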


\begin{proof}
Assume in order to obtain a contradiction that $\{a,b,y\}$ is not an edge-clique. Then, by Lemma
\ref{lem:3cliquesNotClaw}, we can assume, w.l.o.g., that $\{a,b,x\}$ is an edge-clique.
It implies that there is an edge $e_{\ell}$ of $P_a \cap P_x$ covered by $P_b$. Since every  edge
of $P_a\cap P_z$ is covered  by $P_z$, $z$ and $b$ are adjacent, and $z$ and $y$ are non adjacent, we have by Observation \ref{f:between},
that every edge of $P_a\cap P_y$ is covered by $P_b$, which implies  that $\{a,b,y\}$ is an edge-clique, contrary to the assumption.

Thus, $\{a,b,y\}$ is an edge-clique. By Observation \ref{f:two points}, we have that the whole interval of $P_a$ between 
 $P_a \cap P_x$ and   $P_a \cap P_z$ is contained in $P_b$, and so, in particular, $P_a\cap P_y \subset P_b$. Observe that this
 implies that if $q$ is an end vertex of the interval $P_a \cap P_y$, and $e$ is the edge of $P_a$ incident on $q$ that do not belong to $P_y$, then 
$e$ belongs to $P_b$ or to $P_x$ or to $P_z$.
 
 Now, assume there exists a vertex $v$ adjacent to both $a$ and $y$, but not to $b$. Then, the clique $\{a,y,v\}$ has to be a claw-clique. Let $q$ be the center of the claw, notice that $q$ has to be an end vertex of the interval $P_a \cap P_y$.
 Since $v$ is not adjacent to $b$, it follows from the observation at the end of the paragraph above, that
 $v$ has to be adjacent to $x$ or to $z$.

\end{proof}

\begin{lemma}\label{lem:F_6}
The graph $F_6$ on Figure~\ref{fig:16proibidos}(f) is not   $B_1$-EPG.
\end{lemma}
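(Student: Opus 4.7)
\textbf{Proof plan for Lemma \ref{lem:F_6}.}

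The plan is to argue by contradiction: suppose $F_6$ admits a $B_1$-EPG representation, and then use the previously established constraints, particularly Lemma~\ref{l:abclique}, Lemma~\ref{lem:3cliquesNotClaw}, Lemma~\ref{lem:twoClawNotSameCenterInChordal}, and Lemma~\ref{lem:cliquesMaximais}, to derive a geometric contradiction. First I would inspect the graph $F_6$ and locate each edge $\{a,b\}$ whose endpoints share at least three pairwise non-adjacent common neighbors; this is precisely the hypothesis that activates Lemma~\ref{l:abclique}. In the forbidden family of Figure~\ref{fig:16proibidos} these configurations appear repeatedly, so I expect $F_6$ to contain one (or several) copies of $K_2 + 3K_1$ as an induced subgraph with the $K_2$ inside an edge.

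Fix one such edge $\{a,b\}$ with common non-adjacent neighbors $x,y,z$. In any $B_1$-EPG representation, the three intersections $P_a\cap P_x$, $P_a\cap P_y$, $P_a\cap P_z$ appear along $P_a$ in some linear order (recall $P_a$ is a single-bend path, hence combinatorially an interval after straightening). The middle one, say $P_a\cap P_y$, satisfies the hypothesis of Lemma~\ref{l:abclique}, so $\{a,b,y\}$ must be an edge-clique and $P_a\cap P_y\subset P_b$. The same argument applied with the roles of $a$ and $b$ swapped yields analogous constraints on $P_b$. Combining these with the ``Even more'' clause of Lemma~\ref{l:abclique}, every vertex adjacent to $a$ and $y$ but not to $b$ must be adjacent to $x$ or to $z$; this is the key handle to exploit the remaining vertices of $F_6$.

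Next I would use the extra vertices of $F_6$ (those not in $\{a,b,x,y,z\}$) to obtain the contradiction. By examining their neighborhoods in $F_6$, at least one of them, say $w$, will be adjacent to $a$ and $y$ but neither to $b$ nor to $x$ nor to $z$, directly contradicting the ``Even more'' conclusion of Lemma~\ref{l:abclique}; or alternatively, the shape of $F_6$ provides a second independent triple $\{x',y',z'\}$ over a different edge $\{a',b'\}$ sharing vertices with the first, so that the forced middle-vertex conclusion in both applications cannot hold simultaneously along the path $P_a$ (or $P_b$). Either case concludes the proof.

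The main obstacle I expect is the case split on which of $P_a\cap P_x$, $P_a\cap P_y$, $P_a\cap P_z$ lies in the middle: in principle there are three cases, and the argument must be carried out in each. Whatever symmetries $F_6$ enjoys will reduce this, but in the worst case all three orderings must be treated. A secondary difficulty is bookkeeping the geometric data, namely, which paths bend at which grid points and which intersections consist of a single vertex versus an edge, since Lemma~\ref{l:abclique} mixes both kinds of information. Careful use of Observations~\ref{f:between} and~\ref{f:two points} should make each subcase routine once the right edge $\{a,b\}$ and middle neighbor have been identified.
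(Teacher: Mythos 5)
Your plan is correct and is essentially the paper's own argument: the paper takes the edge $\{1,2\}$ with pairwise non-adjacent common neighbors $\{3,4,5\}$, notes that whichever of the three is ``in the middle'' along $P_1$, the corresponding private neighbor among $\{6,7,8\}$ (adjacent to $1$ and to that middle vertex but to none of $2$ and the other two) violates the ``Even more'' clause of Lemma~\ref{l:abclique}, and the symmetry of $F_6$ disposes of all three cases at once. Your first alternative (a vertex $w$ adjacent to $a$ and $y$ but to none of $b,x,z$) is exactly what happens; the second alternative and the extra machinery (Lemmas~\ref{lem:twoClawNotSameCenterInChordal}, \ref{lem:cliquesMaximais}) are not needed.
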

\begin{proof} Let $K=\{1,2\}$ and $I=\{3,4,5\}$. If there exists a $B_1$-EPG representation of $F_6$,  by Lemma \ref{l:abclique},  because of the existence of the vertices $6$, $7$ and $8$, none of the vertices $3$, $4$ and $5$
may intersect $1$ between the remaining two, thus such a representation does not exist.
 \end{proof}

\begin{lemma}\label{lem:F_7}
The graph $F_7$ on Figure~\ref{fig:16proibidos}(g) is not   $B_1$-EPG.
\end{lemma}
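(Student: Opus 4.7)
The plan is to mirror the argument used for $F_6$ in Lemma \ref{lem:F_6}. First I would locate inside $F_7$ a copy of the ``$K_2+3K_1$'' configuration on which Lemma \ref{l:abclique} is built, that is, a clique $K=\{a,b\}$ and an independent triple $I=\{x,y,z\}$ such that every vertex of $K$ is adjacent to every vertex of $I$. The remaining vertices of $F_7$ will then play the role of the ``blockers'' $6,7,8$ in the $F_6$ argument.

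Next I would suppose, for the sake of contradiction, that $F_7$ admits a $B_1$-EPG representation. On the one-bend path $P_a$, the three intersections $P_a\cap P_x$, $P_a\cap P_y$, $P_a\cap P_z$ are linearly ordered, so one of the three independent vertices, say $y$, has $P_a\cap P_y$ located strictly between $P_a\cap P_x$ and $P_a\cap P_z$. By Lemma \ref{l:abclique}, $\{a,b,y\}$ is then an edge-clique with $P_a\cap P_y\subset P_b$, and every vertex adjacent to $a$ and $y$ but not to $b$ (symmetrically, to $b$ and $y$ but not to $a$) must be adjacent to $x$ or to $z$. I would then run through the three possible choices of middle vertex and, in each case, exhibit a vertex of $F_7\setminus(K\cup I)$ whose adjacency pattern in $F_7$ witnesses a violation of this forced conclusion. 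Since all three choices of middle vertex are ruled out, no such $B_1$-EPG representation exists.

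The main obstacle will be the combinatorial bookkeeping: I must confirm that $F_7$ indeed contains the required $K_2$-plus-independent-triple substructure and identify, for each of the three cases, the blocker whose neighborhood within $K\cup I$ contradicts Lemma \ref{l:abclique}. If the three vertices needed are not all immediately available in $F_7$, I would supplement the case analysis with a direct geometric argument via Observations \ref{f:between} and \ref{f:two points}, tracing how the one-bend paths must lie on the grid once the edge-clique $\{a,b,y\}$ is fixed and showing that the position demanded for a remaining path of $F_7$ is impossible with only one bend. I expect the bulk of the work to be isolating the right $(K,I)$ inside $F_7$; once that is done, the contradiction in each of the three cases should be essentially immediate from Lemma \ref{l:abclique}.
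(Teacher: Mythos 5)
Your overall strategy --- locate a $K_2$ completely joined to an independent triple and invoke Lemma~\ref{l:abclique} --- is the right starting point, and it is indeed how the paper begins. But the $F_6$ template you plan to copy does not transfer: for the natural configuration in $F_7$, namely $K=\{1,2\}$ and $I=\{4,5,6\}$, there are only \emph{two} blockers available (the vertices $7$ and $8$), not three. Consequently one of the three ``middle vertex'' cases survives your case analysis: Lemma~\ref{l:abclique} only excludes two orderings and thereby \emph{forces} vertex $6$ to be the one whose intersection with $P_1$ lies between the other two; it does not rule out all three possibilities. At that point your plan stalls, and your fallback (``a direct geometric argument via Observations~\ref{f:between} and~\ref{f:two points}'') is left entirely unexecuted, so the proposal as written does not close the proof.

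The missing idea is a \emph{second} application of Lemma~\ref{l:abclique} to a different $2$-clique through vertex $1$: taking $K'=\{1,3\}$ with its own independent triple, the blockers $5$ and $6$ force vertex $4$ to intersect $P_1$ between $P_1\cap P_5$ and $P_1\cap P_6$. The two forced betweenness conditions on the single one-bend path $P_1$ are incompatible, and that incompatibility is the contradiction. So the gap is concrete: in $F_7$ the contradiction comes from two conflicting forced orderings on $P_1$ obtained from two different $2$-cliques sharing the vertex $1$, not from exhausting three blocker cases of a single application as in $F_6$.
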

\begin{proof} Let $K=\{1,2\}$ and $I=\{4,5,6\}$. If there exists a $B_1$-EPG representation of $F_7$,  by Lemma \ref{l:abclique},  because of the existence of the vertices  $7$ and $8$, the vertex $6$ must intersect  vertex $1$ between $3$ and $4$. But considering $K'=\{1,3\}$, because of the existence of the vertices $5$ and $6$,  vertex $4$ must intersect vertex $1$ between $5$ and $6$. This contradiction implies that such a representation does not exist.
 \end{proof} 
 
 \begin{lemma}\label{lem:F_8_9_10(8)}
The graphs $F_8$, $F_9$ and $F_{10}(8)$ on Figures~\ref{fig:16proibidos}(h), (i) and (j), respectively, are not   $B_1$-EPG.
\end{lemma}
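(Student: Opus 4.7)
The plan is to proceed exactly in the style of Lemmas~\ref{lem:F_6} and~\ref{lem:F_7}, by making repeated use of Lemma~\ref{l:abclique} applied to several $K$--$I$ partitions inside each of $F_8$, $F_9$ and $F_{10}(8)$. For each such graph I would first identify one (or several) copies of the configuration from Lemma~\ref{l:abclique}, namely an edge $K=\{a,b\}$ whose two endpoints have at least three common non-adjacent neighbors $I=\{x,y,z\}$, and I would then look for the required ``witness'' vertices: a vertex adjacent to $a,y$ but not to $b$, nor to $x$, nor to $z$ (and symmetrically one adjacent to $b,y$ but not to $a,x,z$). Once such witnesses exist, Lemma~\ref{l:abclique} forbids $P_a\cap P_y$ from lying between $P_a\cap P_x$ and $P_a\cap P_z$ along the one-bend path $P_a$, and similarly on $P_b$. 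This gives a forced linear order, up to reversal, of the intersections $P_a\cap P_x$, $P_a\cap P_y$, $P_a\cap P_z$.

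Next, for each graph I would combine two (or possibly three) such orderings obtained from different choices of $K$ that overlap on a common vertex. The mechanism is the one illustrated in the proof of Lemma~\ref{lem:F_7}: one choice of $K$ forces a specific vertex, say $y$, to lie at an extreme of the linear order along $P_a$, while a second choice $K'$ sharing $a$ forces the very same vertex to lie strictly between two other intersections on $P_a$. The two conclusions contradict each other, so no $B_1$-EPG representation can exist. For $F_{10}(8)$, which is the case of the $F_{10}$ family with eight vertices, I expect the same scheme to apply with possibly one extra witness vertex playing the role that $7$ and $8$ play in $F_6$ and $F_7$.

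Concretely, the step-by-step structure of the proof I would write for each graph is: (i) fix an arbitrary hypothetical $B_1$-EPG representation; (ii) list the relevant 2-cliques $K$, their independent common-neighbor triples $I$, and the witness vertices produced by the graph's extra apex-type vertices; (iii) apply Lemma~\ref{l:abclique} to each configuration to extract a ``betweenness'' restriction on $P_a$ (and sometimes on $P_b$); (iv) compare these betweenness conditions and derive a contradiction in the linear order of intersections along a single one-bend path. Observations~\ref{f:between} and~\ref{f:two points} are the purely geometric facts underpinning Lemma~\ref{l:abclique}, and are implicitly doing all the work.

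The main obstacle I foresee is bookkeeping: $F_8$, $F_9$ and especially $F_{10}(8)$ have more vertices and more adjacencies than $F_6$ and $F_7$, so the number of 2-cliques $K$ whose common neighborhoods contain an independent triple grows, and several candidate witnesses have to be checked to confirm they satisfy the non-adjacency hypotheses of Lemma~\ref{l:abclique}. If for some configuration no valid witness is available, I would fall back on Lemma~\ref{lem:3cliquesNotClaw} (to force one of three related triangles to be an edge-clique) or on Lemma~\ref{l:AT-free} applied to a neighborhood containing an asteroidal triple, which is a tool already used for $F_1$ through $F_5$. The expectation, confirmed by the pattern in Lemmas~\ref{lem:F_6} and~\ref{lem:F_7}, is that after choosing the right two overlapping $K$--$I$ partitions, each of $F_8$, $F_9$ and $F_{10}(8)$ collapses to a contradiction in a few lines.
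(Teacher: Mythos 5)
Your proposal is a plan rather than a proof: every step is phrased as ``I would identify\dots'', ``I would look for\dots'', and you never actually name the cliques, independent triples, or witness vertices in $F_8$, $F_9$ or $F_{10}(8)$, nor verify any adjacency. The first half of your plan does coincide with the paper: the proof there takes $K=\{2,3\}$, $I=\{1,6,7\}$, uses vertices $4$ and $5$ as the witnesses required by Lemma~\ref{l:abclique}, and concludes that $P_1$ must meet $P_2$ between $P_6\cap P_2$ and $P_7\cap P_2$. But the second half of your plan --- finding a second edge $K'$ whose independent common-neighbor triple yields, via a second application of Lemma~\ref{l:abclique}, a betweenness constraint contradicting the first, as in Lemma~\ref{lem:F_7} --- is not what these graphs support, and this is precisely where your argument has a gap. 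You hedge by offering fallbacks (Lemma~\ref{lem:3cliquesNotClaw}, Lemma~\ref{l:AT-free}), which signals that you have not confirmed the mechanism closes.

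The paper's actual second step is different in kind. It exploits the single vertex $8$, which is adjacent to $2$, $6$ and $7$ but not to $1$. Since $\{2,6,8\}$ is a clique, $P_8$ must meet $P_2$ either in an edge of $P_6\cap P_2$ (edge-clique) or in an edge incident to it (claw-clique); the same holds for $\{2,7,8\}$ relative to $P_7\cap P_2$. Because $P_1\cap P_2$ lies between $P_6\cap P_2$ and $P_7\cap P_2$, the path $P_8$ therefore meets $P_2$ in two edges lying on opposite sides of $P_1\cap P_2$, and Observation~\ref{f:two points} forces $P_8$ to contain all of $P_1\cap P_2$ --- contradicting that $8$ and $1$ are non-adjacent. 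This ``pincer'' argument using one apex vertex on both sides of a forced middle intersection, rather than two conflicting betweenness orders, is the idea your proposal is missing; without it (or an explicit, verified alternative) the proof does not go through.
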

\begin{proof} Let $K=\{2,3\}$ and $I=\{1,6,7\}$. If there exists a $B_1$-EPG representation of any one of those graphs,  by Lemma \ref{l:abclique},  because of the existence of the vertices  $4$ and $5$, the vertex $1$ must intersect  vertex $2$ between $6$ and $7$. In addition, since $\{2,6,8\}$ is a clique, $8$ intersects $2$ in an edge of $P_6\cap P_2$ (edge-clique) or in an edge incident to $P_6\cap P_2$ (claw-clique). Analogously, because of the clique $\{2,7,8\}$,  $8$ intersects $2$ in an edge of $P_7\cap P_2$ (edge-clique) or in an edge incident to $P_7\cap P_2$ (claw-clique). In any case, it implies that $8$ intersects $2$ on two different edges, each one in a different side of $P_2 \cap P_1$, thus, by Observation \ref{f:two points}, $P_8$ contains the interval  $P_2 \cap P_1$, in contradiction with the fact that $1$ and $8$ are not adjacent.
 \end{proof} 
  
  \begin{lemma}\label{lem:F_10(n)}
The graphs  $F_{10}(n)$ for $n \geq 8$ on Figure~\ref{fig:16proibidos}(j) are not   $B_1$-EPG.
\end{lemma}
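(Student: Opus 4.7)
The plan is to generalize the argument of Lemma \ref{lem:F_8_9_10(8)} to the whole family $F_{10}(n)$. I first inspect the structure of $F_{10}(n)$ to identify, in its underlying skeleton, the same configuration that was exploited for $n=8$: a clique $K=\{2,3\}$, an independent set $I=\{1,6,7\}$ where each member of $K$ is adjacent to each member of $I$, and the auxiliary vertices $4,5,8$ playing their former roles (i.e., $\{4,5\}$ forcing the application of Lemma \ref{l:abclique} to pin down the relative order of $P_1,P_6,P_7$ along $P_2$, and vertex $8$ adjacent to $2,6,7$ but non-adjacent to $1$).

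The simplest route, and the one I would try first, is to verify that $F_{10}(8)$ is contained as an induced subgraph in $F_{10}(n)$ for every $n \geq 8$. Since the class $B_1$-EPG is hereditary and Lemma \ref{lem:F_8_9_10(8)} already rules out $F_{10}(8)$, this would immediately give the result. The vertices added when passing from $n=8$ to larger $n$ should form a pendant structure (a path or tail) that can be deleted without disturbing the ``core'' sub-configuration described above.

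If the construction of $F_{10}(n)$ instead modifies the core (so that $F_{10}(8)$ is not literally an induced subgraph), then my backup plan is to repeat the $F_{10}(8)$ argument \emph{in situ} on a suitable induced subgraph of $F_{10}(n)$ containing the seven privileged vertices. Concretely: assume a $B_1$-EPG representation of $F_{10}(n)$ exists. Applying Lemma \ref{l:abclique} to $K=\{2,3\}$, $I=\{1,6,7\}$ with witnesses $4,5$, we force $P_1\cap P_2$ to lie between $P_6\cap P_2$ and $P_7\cap P_2$ on $P_2$. The cliques $\{2,6,8\}$ and $\{2,7,8\}$ then force $P_8$ to meet $P_2$ on two edges placed on opposite sides of $P_1\cap P_2$; Observation \ref{f:two points} yields $P_1\cap P_2 \subseteq P_8$, contradicting $1 \not\sim 8$.

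The main obstacle I foresee is rigorously justifying that the extra vertices introduced for $n>8$ neither destroy the hypotheses of Lemma \ref{l:abclique} (by creating unexpected edges among $\{1,\ldots,8\}$) nor open a new representation option that bypasses the contradiction. Making this precise requires a careful reading of the exact structure of $F_{10}(n)$ as drawn in Figure \ref{fig:16proibidos}(j); once that is settled, the argument itself is essentially identical to the $n=8$ case.
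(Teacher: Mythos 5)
There is a genuine gap. Both of your routes rest on the assumption that the ``extra'' vertices of $F_{10}(n)$ for $n>8$ are a harmless appendage to the $F_{10}(8)$ configuration, and that vertex $8$ is still adjacent to both $6$ and $7$. Neither is true. The graphs $F_{10}(n)$ are \emph{minimal} forbidden induced subgraphs for VPT, so $F_{10}(8)$ cannot occur as an induced subgraph of $F_{10}(n)$ for $n\geq 9$; your first route is a dead end. In the actual structure, the single vertex $8$ of the case $n=8$ is replaced by a path $8,9,\ldots,n$ all of whose vertices are adjacent to $2$ and none to $1$, with $8$ adjacent to $6$ but \emph{not} to $7$, and $n$ adjacent to $7$. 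Hence the clique $\{2,7,8\}$ you invoke in your backup plan does not exist for $n\geq 9$, and the step ``$P_8$ meets $P_2$ on two edges on opposite sides of $P_1\cap P_2$'' collapses.

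The missing idea is a discrete intermediate-value argument along that path. The paper's proof first pins down, exactly as you do, that $P_1\cap P_2$ lies between $P_6\cap P_2$ and $P_7\cap P_2$ via Lemma~\ref{l:abclique}. Then the clique $\{2,6,8\}$ forces $P_8$ to meet $P_2$ on the $P_6$-side of $P_2\cap P_1$, and the clique $\{2,7,n\}$ forces $P_n$ to meet $P_2$ on the $P_7$-side. Walking along the path $8,9,\ldots,n$, whose vertices all meet $P_2$ in an edge, there must exist two consecutive vertices $j$ and $j+1$ meeting $P_2$ on opposite sides of $P_2\cap P_1$. Since $P_j$ and $P_{j+1}$ share an edge, Observation~\ref{f:between} forces one of them to contain an edge of $P_2\cap P_1$, contradicting that neither $j$ nor $j+1$ is adjacent to $1$. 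This replacement of Observation~\ref{f:two points} (applied to one vertex) by Observation~\ref{f:between} (applied to a sign change along the path) is precisely the new content of the $n\geq 9$ case, and it is absent from your proposal.
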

\begin{proof} The case $n=8$ was considered  in the previous Lemma \ref{lem:F_8_9_10(8)}, so assume $n\geq 9$.   Let $K=\{2,3\}$ and $I=\{1,6,7\}$. If there exists a $B_1$-EPG representation of any one of those graphs,  by Lemma \ref{l:abclique},  because of the existence of the vertices  $4$ and $5$, the vertex $1$ must intersect  vertex $2$ between $6$ and $7$. In addition, since $\{2,6,8\}$ is a clique, $8$ intersects $2$ in an edge of $P_6\cap P_2$ (edge-clique) or in an edge incident to $P_6\cap P_2$ (claw-clique). Analogously, because of the clique $\{2,7,n\}$,  $n$ intersects $2$ in an edge of $P_7\cap P_2$ (edge-clique) or in an edge incident to $P_7\cap P_2$ (claw-clique). In any case, it implies that $8$  and $n$ intersect $2$ on two different edges, each one in a different side of $P_2 \cap P_1$. Therefore, there exist two consecutive vertices of the path $8, 9, \ldots, n$, say the vertices $j$ and $j+1$, such that each one intersects $P_2$ on a different side of $P_2 \cap P_1$. Thus, by Observation \ref{f:between}, $P_j$ or $P_{j+1}$ must contain the interval  $P_2 \cap P_1$, in contradiction with the fact that neither $j$ nor $j+1$ is adjacent to $1$.
 \end{proof}

\begin{landscape}
 \begin{figure}[htb]	
 
   \centering
  \begin{tabular}{ccccc}
    \includegraphics[width=2.5cm]{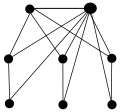} 
    & 
    \includegraphics[width=2.3cm]{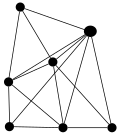} 
    & 
    \includegraphics[width=3cm]{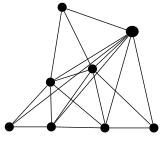} 
    & 
    \includegraphics[width=3cm]{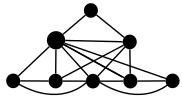} 
    & 
    \includegraphics[width=3cm]{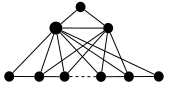} 
    \\
    \footnotesize 
    (a)  \footnotesize Graph $F_1$. 
    & 
    \footnotesize (b) Graph $F_2$.
    & 
    \footnotesize (c) Graph $F_3$.
    & 
    \footnotesize (d) Graph $F_4$.
    & 
    \footnotesize (e) Graph $F_5(n),n\geq7$.
    \\
        \includegraphics[width=2.5cm]{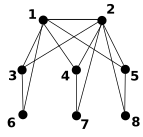} 
    & 
    \includegraphics[width=3.5cm]{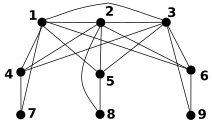} 
    & 
    \includegraphics[width=3cm]{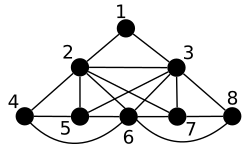} 
    & 
    \includegraphics[width=3cm]{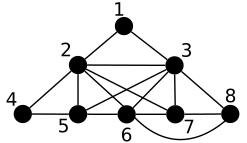} 
    & 
    \includegraphics[width=3cm]{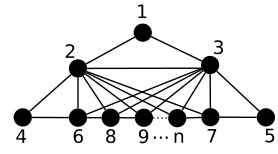} 
    \\ 
    \footnotesize 
    (f)  \footnotesize Graph $F_6$. 
    & 
    \footnotesize (g) Graph $F_7$.
    & 
    \footnotesize (h) Graph $F_8$.
    & 
    \footnotesize (i) Graph $F_9$.
    & 
    \footnotesize (j) Graph $F_{10}(n), n\geq  8$.
    \\
        \includegraphics[width=3cm]{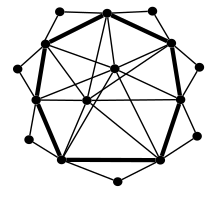} 
    & 
    \includegraphics[width=3cm]{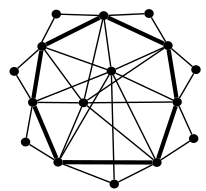} 
    & 
    \includegraphics[width=3cm]{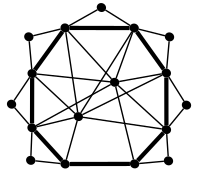} 
    & 
    \includegraphics[width=3cm]{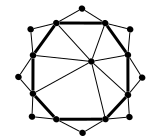} 
    & 
    \includegraphics[width=3cm]{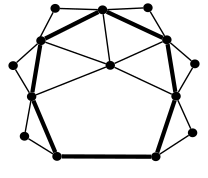} 
    \\ 
    \footnotesize 
    (k)  \footnotesize  $F_{11}(4k),k\geq2$. 
    & 
    \footnotesize (l)  $F_{12}(4k),k\geq2$.
    & 
    \footnotesize (m)  $F_{13}(4k+1),k\geq2$.
    & 
    \footnotesize (n)  $F_{14}(4k+1),k\geq2$.
    & 
    \footnotesize (o)  $F_{15}(4k+2),k\geq2$.
    
    \\ 
    
    && \includegraphics[width=3cm]{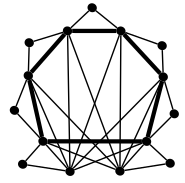} &&
    
    \\
    
    && \footnotesize (p)  $F_{16}(4k+3),k\geq2$. &&
    
  \end{tabular}
 \caption{The 16 Chordal induced subgraphs forbidden to VPT (the vertices in the cycle marked by bold edges form a clique).}
 \label{fig:16proibidos}
\end{figure}  
 \end{landscape}

We have proved that every minimal forbidden induced subgraph for VPT is also a  forbidden induced subgraph for Chordal $B_1$-EPG. Moreover, there are graphs in VPT that do not belong to $B_1$-EPG, for instance the graph $4$-sun $S_4$ is not in $B_1$-EPG, see~\cite{golumbic2009}, but it has a VPT representation, see Figures~\ref{fig:exemplos}(a) and~\ref{fig:exemplos}(b). Thus, VPT graphs  properly contain Chordal $B_1$-EPG graphs. This ends the proof of Theorem~\ref{teo:chordalB1inVPT}. 

  \begin{cor}\label{c:prohib}
  Each one of the graphs depicted on Figure \ref{fig:16proibidos} is a forbidden induced subgraph for the class $B_1$-EPG.
  \end{cor}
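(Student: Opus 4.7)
The plan is essentially a bookkeeping argument: all of the work has been done in the lemmas and remarks preceding the corollary, and what remains is to package those facts using the fact that the class $B_1$-EPG is hereditary.

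First I would record the hereditary observation explicitly: if $G$ is $B_1$-EPG and $H$ is an induced subgraph of $G$, then restricting any $B_1$-EPG representation of $G$ to the paths of the vertices of $H$ produces a $B_1$-EPG representation of $H$. Consequently, in order to prove that a graph $H$ is a forbidden induced subgraph for the class $B_1$-EPG it is sufficient to exhibit that $H$ itself is not $B_1$-EPG.

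Next I would split the sixteen graphs of Figure~\ref{fig:16proibidos} into the three groups according to the tool that rules each of them out. Group (i), consisting of $F_1, F_2, F_3, F_4$ and $F_5(n)$, is handled by the paragraph immediately after the introduction of Figure~\ref{fig:16proibidos}: in each of these graphs the neighborhood of the distinguished universal vertex contains an asteroidal triple, so Lemma~\ref{l:AT-free} forbids any $B_1$-EPG representation. Group (ii), consisting of $F_{11}(4k), F_{12}(4k), F_{13}(4k+1), F_{14}(4k+1), F_{15}(4k+2)$ and $F_{16}(4k+3)$, is handled by the same paragraph using Lemma~\ref{l:branch}: in each case the branch graph $B(G\,|\,C)$ of the clique $C$ drawn in bold contains an induced $C_n$ with $n\ge 4$ or an induced $P_6$, contradicting the forbidden-subgraph list of branch graphs of $B_1$-EPG graphs. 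Group (iii), consisting of $F_6$, $F_7$, $F_8$, $F_9$ and $F_{10}(n)$ for $n\ge 8$, is exactly the content of Lemmas~\ref{lem:F_6}, \ref{lem:F_7}, \ref{lem:F_8_9_10(8)} and \ref{lem:F_10(n)}, each of which asserts that the corresponding graph is not $B_1$-EPG.

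Combining the three groups covers every graph displayed in Figure~\ref{fig:16proibidos}, so together with the hereditary observation we conclude that each of them is indeed a forbidden induced subgraph for $B_1$-EPG. There is no genuine obstacle here: the difficulty was absorbed by the preceding AT, branch-graph, and path-geometry arguments, and the corollary is really a one-line consequence of the fact that $B_1$-EPG is closed under taking induced subgraphs together with the already established list of non-$B_1$-EPG witnesses.
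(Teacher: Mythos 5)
Your proposal is correct and matches the paper's own (implicit) argument exactly: the corollary is just the aggregation of the preceding results, with $F_1$--$F_5$ handled by Lemma~\ref{l:AT-free}, $F_{11}$--$F_{16}$ by Lemma~\ref{l:branch}, and $F_6$--$F_{10}(n)$ by Lemmas~\ref{lem:F_6}--\ref{lem:F_10(n)}, combined with the hereditariness of $B_1$-EPG. Nothing is missing.
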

  


\begin{figure}[htb]
  \centering
  \begin{tabular}{ c c c }
    \centering
    \includegraphics[width=5cm]{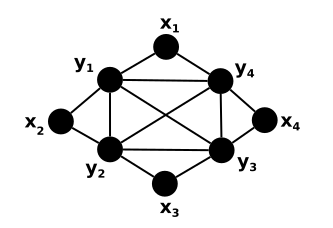} & &
    \includegraphics[width=8cm]{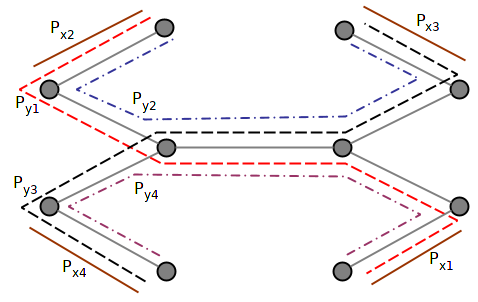}
    \\
    \footnotesize \centering 
    (a)  \footnotesize Graph $S_4$. &&  \footnotesize (b) A VPT and EPT representation of $S_{4}$. \\

  \end{tabular}

 \caption{Graph $S_4$ and one of its possible VPT and EPT representations.}
 \label{fig:exemplos}
\end{figure}

\begin{theorem}\label{teo:b1epgept}
Chordal $B_1$-EPG $\subsetneq$ EPT. 
\end{theorem}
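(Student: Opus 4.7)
The plan is to route the argument through VPT. By Theorem~\ref{teo:chordalB1inVPT}, every Chordal $B_1$-EPG graph $G$ is VPT. Two classical facts recalled in the introduction come into play: any Chordal EPT graph is VPT~\cite{syslo1985triangulated}, and on a host tree of maximum degree at most $3$ the classes VPT and EPT coincide~\cite{golumbic1985edge}. Consequently, to establish Chordal $B_1$-EPG $\subseteq$ EPT, it suffices to produce, for every Chordal $B_1$-EPG graph, a VPT representation whose host tree has maximum degree at most $3$; such a representation is then automatically EPT.

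I would build such a representation directly from a given $B_1$-EPG representation of $G$. The host tree is obtained as the \emph{skeleton} of the representation, that is, the subgraph of the grid formed by the union of all edges covered by some path, together with a local replacement at each claw-clique centre that forces branching vertices to have degree exactly $3$. The crucial point is that this skeleton is acyclic. If it contained a cycle, then the shortest such cycle would be a $4$-cycle of the grid whose four edges are each covered by some path; via Lemma~\ref{lem:twoClawNotSameCenterInChordal} and the structure of one-bend paths, four of the covering paths would form a true or false pie and hence induce a $C_4$ in $G$, contradicting chordality.

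For strict containment, the $4$-sun $S_4$ of Figure~\ref{fig:exemplos}(a) is chordal and EPT (the representation in Figure~\ref{fig:exemplos}(b) is simultaneously VPT and EPT), but $S_4$ is not $B_1$-EPG by~\cite{golumbic2009}. Thus $S_4$ belongs to EPT $\setminus$ Chordal $B_1$-EPG.

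The hard part will be the claim that the skeleton of any $B_1$-EPG representation of a chordal graph is acyclic. Lemma~\ref{lem:twoClawNotSameCenterInChordal} handles only pies at a common centre, whereas a covered $4$-cycle of the grid could in principle be traversed by paths that bend at locations other than the corners; reducing to a pie therefore requires either identifying, or locally modifying, four paths whose bends sit precisely at the corners of the cycle. If the direct reduction proves too delicate, a fallback is to enumerate the minimal forbidden induced subgraphs for Chordal EPT and verify, using AT-free and branch-graph arguments in the spirit of Lemmas~\ref{lem:F_6}--\ref{lem:F_10(n)}, that each of them is already excluded from Chordal $B_1$-EPG.
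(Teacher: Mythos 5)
Your overall frame (go through Theorem~\ref{teo:chordalB1inVPT} to get VPT, then obtain a VPT representation on a host tree of maximum degree~$3$ and invoke the degree-$3$ equivalence of VPT and EPT from \cite{golumbic1985edge}; use $S_4$ for strictness) matches the paper's, but the step where you actually produce the degree-$3$ host tree is a genuine gap, and it is precisely the step the paper handles differently. Your claim that the skeleton of a $B_1$-EPG representation of a chordal graph is acyclic is false as stated: the shortest cycle in the skeleton need not be a unit square of the grid (two one-bend paths can cover the eight boundary edges of a $2\times 2$ block while sharing no grid edge, so the skeleton contains an $8$-cycle even though the represented graph is edgeless, hence chordal), and even a fully covered unit square need not yield four paths forming a pie, since a single path can cover two of its sides and the covering paths need not bend at a common centre. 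So the reduction to Lemma~\ref{lem:twoClawNotSameCenterInChordal} does not go through. Notably, the authors say explicitly in Section~4 that they made ``several unsuccessful attempts'' along exactly this line --- trying to show that a representation whose paths cover a polygon of the grid forces a chordless cycle --- and that the only proof they found avoids the skeleton entirely.

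What the paper does instead: it combines Lemma~\ref{l:branch}, which gives $\chi(B(G|C))\leq 3$ for every maximal clique $C$ of a $B_1$-EPG graph, with Theorem~10 of \cite{alcon2014recognizing}, which states that a VPT graph whose branch graphs all have chromatic number at most $h$ admits a VPT representation on a host tree of maximum degree $h$. This yields the degree-$3$ VPT representation abstractly, with no reference to the grid representation. Your fallback (enumerating minimal forbidden subgraphs for Chordal EPT) is not carried out and no such list is supplied in the paper, so it cannot rescue the argument as written. To repair your proof, replace the skeleton construction with the branch-graph argument; the rest of your outline, including the $S_4$ example for strictness, is correct and agrees with the paper.
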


\begin{proof}
Let $G$ be a  Chordal $B_1$-EPG graph. By the previous Theorem \ref{teo:chordalB1inVPT}, $G$ is VPT. And,  by Lemma \ref{l:branch}, $\chi(B(G/C))\leq 3$ for every maximal clique $C$ of $G$.   In \cite{alcon2014recognizing} (see Theorem 10), it was proved that if the chromatic number of the branch graph of a VPT graph is at most $h$ for every maximal clique, then the graph admits a VPT representation on a host tree with maximum degree $h$.  Therefore, $G$ admits a VPT representation on a host tree with maximum degree 3.  Finally, in \cite{ golumbic1985edge} (see Theorem 2), it was prove that any VPT graph that  admits a representation on a host tree with maximum degree 3 is also an EPT graph. Consequently,  $G$ is EPT.

The same graph   $S_4$ used in the proof of the previous theorem (see Figure~\ref{fig:exemplos}(b)) shows that there are EPT graphs that are not $B_1$-EPG. 
%
\end{proof}

\section{Conclusion and Open Questions}

In this paper, we have considered three different path-intersection graph classes: $B_1$-EPG, VPT and EPT graphs. We showed that  $\{S_3, S_{3'},S_{3''},C_4\}$-free graphs and others non-trivial subclasses of  $B_1$-EPG graphs have are Helly-$B_1$-EPG, namely by instance Bipartite, Block, Cactus and Line of Bipartite graphs. 
  
 We presented an infinite family of forbidden induced subgraphs for the class  $B_1$-EPG and in particular we proved  that Chordal $B_1$-EPG $\subset$ VPT $\cap$ EPT.


In~\cite{ries2009}, Asinowski and Ries described the   Split graphs that are $B_1$-EPG graphs in case the the stable set  or the  central  size have size three. 
The graphs $F_2, F_{11}, F_{13}, F_{14}$ and $F_{15}$, given in Figure~\ref{fig:16proibidos} are Split, we have  used a different approach  to prove that they are not $B_1$-EPG graphs. So one question is pertinent: Can we characterize Split graphs in general based in results of this paper? 

Finally, another interesting research would be to explore families of Helly-EPG graphs more deeply. We would like to understand the behavior of other graph classes inside $B_1$-EPG graph class, i.e. if given an  input graph $G$ that is an instance of (for example) Weakly Chordal $B_1$-EPG. What is the relationship of $G$ with the EPT/VPT graph class? What happens when we demand that the representations be Helly-$B_1$ EPG? Does  recognizing problem remains hard for each one of these classes?

\section*{Acknowledgement}

The present work was done while the third author was a doctoral research fellow at National University of La Plata - UNLP, Math Department. The support of this institution is gratefully acknowledged.

The third author (Tanilson) would like to thank the partial financing of this study by the Coordena{\c c}\~ao de Aperfei{\c c}oamento de Pessoal de N\'ivel Superior - Brasil (CAPES) - Finance Code 001.


\begin{thebibliography}{99}

\bibitem{alcon2014recognizing}
L. Alc{\'o}n, M. Gutierrez and M.P. Mazzoleni. \emph{Recognizing vertex intersection graphs of paths on bounded degree trees}. Discrete Applied Mathematics, 162 (2014), 70-77.

\bibitem{alcon2015characterizing}
  L. Alc{\'o}n, M. Gutierrez,  and M.P. Mazzoleni. \emph{Characterizing paths graphs on bounded degree trees by minimal forbidden induced subgraphs}. Discrete Mathematics, 338 (2015), 103-110. 

\bibitem{ries2009}
A. Asinowski and B. Ries. \emph{Some properties of edge intersection graphs of single bend paths on a grid}.  Electronic Notes in Discrete Mathematics, 312 (2012), pp. 427-440.

\bibitem{Asinowski2009}
A. Asinowski and A. Suk. \emph{Edge intersection graphs of systems of paths on a grid with a bounded 	number of bends}. Discrete Applied Mathematics, 157 (2009), pp. 3174-3180.

\bibitem{dmtcs:6506} 
  C.F. Bornstein, M.C. Golumbic, T.D. Santos, U.S. Souza,  and J.L. Szwarcfiter. \emph{The Complexity of Helly-$B_{1}$ EPG Graph Recognition}. Discrete Mathematics \& Theoretical Computer Science, 22 (2020), https://dmtcs.episciences.org/6506/pdf.

 \bibitem{cela2019monotonic}
E. Cela and E. Gaar. \emph{Monotonic Representations of Outerplanar Graphs as Edge Intersection Graphs of Paths on a Grid}. arXiv preprint arXiv:1908.01981 (2019).

\bibitem{cohen2014}
E. Cohen, and M. C. Golumbic and B. Ries. \emph{Characterizations of cographs as intersection graphs of paths on a grid}. Discrete Applied Mathematics, 178 (2014), pp. 46-57.

\bibitem{gavril1974intersection}
 F. Gavril. \emph{The intersection graphs of subtrees in trees are exactly the chordal graphs}.  Journal of Combinatorial Theory, Series B, 16 (1974), pp. 47-56.

\bibitem{gavril1978recognition}
F. Gavril. \emph{A recognition algorithm for the intersection graphs of paths in trees}. Discrete Mathematics, 23 (1978), pp. 211-227.

\bibitem{golumbic1985edge}
M.C. Golumbic and R.E. Jamison. \emph{Edge and vertex intersection of paths in a tree}. Discrete Mathematics, 55 (1985), pp. 151-159.

\bibitem{golumbic1985}
M.C. Golumbic and R.E. Jamison. \emph{The edge intersection graphs of paths in a tree}. Journal of Combinatorial Theory, B 38 (1985), pp. 8-22.

\bibitem{golumbic2009}
M.C. Golumbic, M. Lipshteyn and  M. Stern. \emph{Edge intersection graphs of single bend paths on a grid}. Networks, 54 (2009), pp. 130-138. 

\bibitem{golumbic2013}
M.C. Golumbic, M. Lipshteyn and M. Stern. \emph{Single bend paths on a grid have strong {H}elly number 4}. Networks, 62 (2013), pp. 161-163.

\bibitem{golumbic2018edge}
M.C. Golumbic, G. Morgenstern and D. Rajendraprasad. \emph{Edge-intersection graphs of boundary-generated paths in a grid}. Discrete Applied Mathematics, 236 (2018), pp. 214-222.

\bibitem{golumbic2013intersection}
M.C. Golumbic and B. Ries. \emph{On the intersection graphs of orthogonal line segments in the plane: characterizations of some subclasses of chordal graphs}. Graphs and Combinatorics, 29 (2013), pp. 499-517.

\bibitem{harary1974line}
F. Harary and C. Holzmann. \emph{Line graphs of bipartite graphs}. Revista de La Sociedad Matematica de Chile, 1 (1974), pp. 19-22.

  \bibitem{daniel2014b}
D. Heldt, K. Knauer and T. Ueckerdt. \emph{On the bend-number of planar and outerplanar graphs}. Discrete Applied Mathematics, 179 (2014), pp. 109-119.
  
\bibitem{leveque2009characterizing}
  B. L{\'e}v{\^e}que, F. Maffray  and M. Preissmann. \emph{Characterizing path graphs by forbidden induced subgraphs}. Journal of Graph Theory, 62 (2009), pp. 369-384.


\bibitem{syslo1985triangulated}
  M.M. Sys{\l}o. \emph{Triangulated edge intersection graphs of paths in a tree}. Discrete mathematics,  55 (1985)  pp. 217-220.
\end{thebibliography}
\end{document}